\DeclareFontFamily{U}{upml}{}
\DeclareFontShape{U}{upml}{m}{n}{<-6>upml<6-8>upml<8->upml}{}
\DeclareSymbolFont{UPML}{U}{upml}{m}{n}
\DeclareMathSymbol{\blwedge}{\mathrel}{UPML}{"63}
\DeclareMathSymbol{\blvee}{\mathrel}{UPML}{"64}
\DeclareMathSymbol{\blRightarrow}{\mathrel}{UPML}{"69}
\DeclareMathSymbol{\blLeftrightarrow}{\mathrel}{UPML}{"65}
\DeclareMathSymbol{\blneg}{\mathord}{UPML}{"6E}
\DeclareMathSymbol{\blotimes}{\mathbin}{UPML}{"6F}
\def\logand{\blotimes}
\def\loginf{\blwedge}
\def\logsup{\blvee}
\newcounter{global}
\theoremstyle{definition}
\newtheorem{assumption}[global]{Assumption}
\theoremstyle{plain}
\newtheorem{theorem}[global]{Theorem}
\newtheorem{lemma}[global]{Lemma}
\newtheorem{corollary}[global]{Corollary}
\newtheoremstyle{note}{}{}{}{}{\itshape}{.}{.5em}{}
\theoremstyle{note}
\newtheorem{remark}{Remark}%
\renewcommand\section{%
  \@startsection {section}{1}{\z@}%
  {-3.5ex \@plus -1ex \@minus -.2ex}%
  {2.3ex \@plus.2ex}%
  {\normalfont\large\bfseries}}
\let\logand=\loginf
\let\Rightarrow=\logimp
\let\Leftrightarrow=\logequ
\let\logtnorm=\blotimes
\def\itm#1{{\rm(\textit{\romannumeral#1})}}
\def\Luk{\ensuremath{\text{\textbf{\upshape{\L}}}}}
\def\Gog{\ensuremath{\mathbf{\Pi}}}
\def\RatL{\ensuremath{[0,1]_\mathbb{Q}}}
\def\C{\@ifnextchar[{\@withC}{\@withoutC}}
\def\@withC[#1]{\ensuremath{\boldsymbol{c}_{#1}\@ifnextchar\bgroup{\@witharg}{\relax}}}
\def\@withoutC{\ensuremath{\boldsymbol{c}\@ifnextchar\bgroup{\@witharg}{\relax}}}
\def\@witharg#1{\@leftpar #1\@rightpar}
\begin{document}

\title{Rational fuzzy attribute logic}

\date{\normalsize%
  Dept. Computer Science, Palacky University, Olomouc}

\author{Vilem Vychodil\footnote{%
    e-mail: \texttt{vychodil@binghamton.edu},
    phone: +420 585 634 705,
    fax: +420 585 411 643}}

\maketitle

\begin{abstract}
  We present a logic for reasoning with if-then formulas which involve
  constants for rational truth degrees from the unit interval. We introduce
  graded semantic and syntactic entailment of formulas. We prove
  the logic is complete in Pavelka style and depending on the choice of
  structure of truth degrees, the logic is a decidable fragment of
  the Rational Pavelka logic (RPL) or the Rational Product Logic (R$\Pi$L).
  We also present a characterization of the entailment based
  on least models and study related closure structures.
\end{abstract}

\section{Introduction}\label{sec:intro}
In this paper, we are interested in a logic for reasoning with if-then rules
describing dependencies between graded attributes. A graded attribute may be seen
as a propositional variable which may be assigned a truth degree coming from a scale
of degrees which includes intermediate degrees of truth. In this paper, we use
particular complete residuated lattices on the real unit interval as the scales of
truth degrees. As a consequence, if $p$ is a propositional variable and $e(p)$
denotes its truth value under the evaluation $e$, we admit $e(p) \in [0,1]$ with
the possibility of $0 < e(p) < 1$. The meaning of the degrees assigned to
propositional variables conforms to the meaning in mathematical fuzzy
logics~\cite{CiHaNo1,CiHaNo2,Got:Mfl,Haj:MFL}, i.e.,
the meaning is \emph{comparative}: $e(p) < e(q)$ means that propositional variable
$p$ is (strictly) less true than $q$ under the evaluation $e$.
Let us stress at this point that the degrees we use are not and shall not be
interpreted as degrees of belief or evidence, cf. ``the frequentist's temptation''
in \cite{Haj:MFL} and also \cite{HaPa:Adfl}.

Admitting the intermediate degrees of truth may be valuable in situations
where attributes represented by propositional variables may not be assigned either
of the classic truth degrees $0$ (falsity) and $1$ (truth). This may occur in
situations where human perception and judgment is involved. For instance,
if we ask a person whether \$400,000 is a high price for a house in
a particular neighborhood, the person may hesitate to describe the price
as being (strictly) \emph{high} or (strictly) \emph{not high}.
Indeed, the person may feel that a sentence
``The price of the house sold at \$400,000 is high''
is true in a general way, but is not completely true. This fact may be
captured by assuming that the attribute ``price is high'' represented by
a propositional variable $p$ is assigned a truth degree $e(p)$ which is close
to $1$ but strictly less than $1$ and we may further be interested in reasoning
with this fact. Formal treatment of reasoning with such facts is the main
subject of mathematical fuzzy logics and is also the main subject of this paper.

In this paper we focus on reasoning with formulas formalizing if-then dependencies
between graded attributes. Namely, given an evaluation of propositional variables
$p_1,p_2,\ldots,q_1,q_2,\ldots$, we formalize (the semantics of) rules of the form
\begin{quote}
  if
  ($p_1$ is true at least to degree $a_1$ and ${\cdots}$ and \\
  $\phantom{\text{if (}}$%
  $p_n$ is true at least to degree $a_n$), \\
  then 
  ($q_1$ is true at least to degree $b_1$ and ${\cdots}$ and \\
  $\phantom{\text{then (}}$%
  $q_n$ is true at least to degree $b_n$).
\end{quote}
Moreover, we present a logical system for reasoning with such rules
and prove its soundness, completeness, and present its relationship to
existing logics. We call our logic \emph{rational} because we allow only
rational degrees from the unit interval to appear in the if-then rules.
This is a similar assumption as in the Rational Pavelka logic (RPL)
proposed by H\'ajek~\cite{Haj:Flah,Haj:MFL} which extends \L ukasiewicz logic
by constants for rational truth degrees and corresponding bookkeeping axioms.
In fact, when the standard \L ukasiewicz algebra is used as the structure
of truth degrees, our logic can be seen as a particular fragment of RPL.
Analogously, when the standard Goguen (product) algebra is used as the structure
of truth degrees, our logic becomes a fragment of the Rational Product Logic
(R$\Pi$L) which has been proposed by Esteva, Godo, H\'ajek, and Navara
in~\cite{EsGoHaNa:Rflin}.

Another common attribute of our logic and RPL (or R$\Pi$L) is that we
consider degrees of entailment on both the semantic and syntactic levels.
That is, for a formula $\varphi$ and a theory $\Sigma$, we consider
a degree $||\varphi||_\Sigma \in [0,1]$ to which $\varphi$ is semantically
entailed by $\Sigma$ and in general, we may have $0 < ||\varphi||_\Sigma < 1$.
More importantly, we also consider a degree $|\varphi|_\Sigma$ to which $\varphi$
is provable by $\Sigma$ and, again, we may have $0 < |\varphi|_\Sigma < 1$.
Logics with graded notions of provability were first investigated by
Pavelka~\cite{Pav:Ofl1,Pav:Ofl2,Pav:Ofl3} who was inspired by ideas in the
influential paper by Goguen~\cite{Gog:Lic}. In his seminal papers,
Pavelka proposed very general approach to abstract logics with
graded notions of provability and proved completeness of a propositional
logic based on the standard \L ukasiewicz algebra~\cite{KMP:TN}
as the structure of truth degrees. The completeness theorem states that
$|\varphi|_\Sigma = ||\varphi||_\Sigma$, i.e., the degrees of semantic
entailment coincide with the degrees of provability. This particular type
of completeness of multiple-valued logics has later become known as
the Pavelka completeness, cf.~\cite[Section 9.2]{Haj:MFL}.
The contribution of H\'ajek's RPL is in simplification of Pavelka's ideas,
including (i) the use of (constants for) only rational truth degrees and thus
keeping the language of the logic countable and (ii) considering proofs in the
ordinary sense instead of considering them as sequences of weighted formulas
as in the original approach~\cite{Pav:Ofl1,Pav:Ofl2,Pav:Ofl3} by Pavelka.
Further details and development of Pavelka-complete logics can be
found in \cite{Ger:FL,NoPeMo:MPFL} and the references therein.

The logic we present is complete in the Pavelka style and depending on the
choice of a structure of truth degrees (the standard \L ukasiewicz or
Goguen algebra), it may be seen as a decidable fragment of RPL or R$\Pi$L
which uses only formulas in a particular form. Our logic may be seen as logic
which falls into the category of fuzzy logics with rational constants for
truth degrees~\cite{CiEsGo:Lltc,EsGiGoNo:Atclct,EsGoNo:Fotfltc,SaCiEsGoNo:Plwc}
and develops our previous results on logic of fuzzy attribute implications
which were limited to finite structures of truth
degrees~\cite{BeVy:ICFCA,BeVy:ADfDwG} or utilized infinitary deduction rules
in order to ensure completeness in Pavelka style~\cite{BeVy:Falcrl,KuVy:Flprrai}.
In contrast to these previous results, the present approach shows logics with
finitary deduction rules and particular structures of truth degrees defined
on the real unit interval. In a broader sense, our paper is interested in
if-then rules which generalize analogous rules that appear in 
database systems as functional dependencies~\cite{Mai:TRD}),
logic programming~\cite{Lloyd84},
or data analysis~\cite{AgImSw:ASR,GaWi:FCA,Zak:Mnrar} as
attribute implications or association rules.

Our paper is structured as follows. In Section~\ref{sec:prelim}, we survey
preliminaries from structures of truth degrees used in this paper (the section
may be skipped by readers familiar with residuated lattices). In
Section~\ref{sec:rfal}, we describe our logic and present its completeness.
The proof of completeness together with further notes are presented
in Section~\ref{sec:details}. In Section~\ref{sec:clops}, we present
characterization of closure structures related to systems of models of theories.

\section{Preliminaries}\label{sec:prelim}
In our paper, we consider particular linear complete integral commutative
residuated lattices~\cite{GaJiKoOn:RL,DiWa} as the structures of degrees.
The structures are defined as general algebras
$\mathbf{L}= \langle L,\wedge,\vee,\otimes,\rightarrow,0,1\rangle$ of
type $\langle 2,2,2,2,0,0\rangle$, where
\begin{itemize}\parskip=0pt
\item[\itm{1}]
  $\langle L,\wedge,\vee,0,1 \rangle$ is a linear complete lattice~\cite{Bir:LT},
\item[\itm{2}]
  $\langle L,\otimes,1 \rangle$ is a commutative monoid, and
\item[\itm{3}]
  for all $a,b,c \in L$, we have
  $a \otimes b \leq c$ if{}f $a \leq b \rightarrow c$
\end{itemize}
Property \itm{3} is called the adjointness of $\otimes$ (truth function of
``fuzzy conjunction'') and $\rightarrow$ (truth function of ``fuzzy implication'');
note that $\leq$ denotes the linear order induced by $\wedge$, i.e.,
$a \leq b$ if{}f $a \wedge b = a$ (equivalently, $a \vee b = b$).

In our paper, we use linear complete residuated lattices defined on the real unit
interval which are given by left-continuous triangular
norms~\cite{KMP:TN} which play central role in the Basic Logic
(BL, see~\cite{Haj:MFL}) and Monoidal T-norm Logic (MTL, see \cite{EsGo:MTL}).
That is, $\langle L,\wedge,\vee,0,1 \rangle$ represents the real unit interval
with its natural ordering ($\wedge$ and $\vee$ coincides with operations of minima
and maxima, respectively), $\otimes$ is associative, commutative, neutral with
respect to $1$, and is left-continuous (distributive with respect to general suprema),
i.e., it satisfies
\begin{align}
  \textstyle\bigvee\{a \otimes b_i;\, i \in I\} &=
  a \otimes \textstyle\bigvee\{b_i;\, i \in I\}
\end{align}
for all $a \in [0,1]$ and $\{b_i \in [0,1];\, i \in I\}$.
Moreover, the corresponding (uniquely given) $\rightarrow$ which is adjoint
to $\otimes$ is then given by
\begin{align}
  a \rightarrow b &= \textstyle\bigvee\{c \in [0,1];\, a \otimes c \leq b\}.
\end{align}
Two structures which are most relevant for our investigation are the
so-called standard \L ukasiewicz and Goguen (product) algebras denoted
$\Luk$ and $\Gog$, respectively, where the multiplications and residua
are given by
\begin{align}
  a \otimes_\Luk b &= \max\{0,a+b-1\}, \\
  a \rightarrow_\Luk b &= \min\{1,1-a+b\}, 
\end{align}
and
\begin{align}
  a \otimes_\Gog b &= ab, \\
  a \rightarrow_\Gog b &=
  \begin{cases}
    1, & \text{if } a \leq b, \\
    \frac{b}{a}, &\text{otherwise.}
  \end{cases}
\end{align}
If $\Luk$ and $\Gog$ are clear from the context, we omit the subscripts and write
just $\otimes$ and $\rightarrow$, respectively.
Notice that both $\otimes_\Luk$ and $\otimes_\Gog$ are continuous functions.
As a consequence,
\begin{align}
  \textstyle\bigwedge\{a \otimes b_i;\, i \in I\} &=
  a \otimes \textstyle\bigwedge\{b_i;\, i \in I\}
\end{align}
for all $a \in [0,1]$ and $\{b_i \in [0,1];\, i \in I\}$.
Moreover, $\rightarrow_\Luk$ is continuous in both its arguments,
$\rightarrow_\Gog$ is continuous in the second argument and
left-continuous in the first one. As a consequence,
the residua in both $\Luk$ and $\Gog$ satisfy
\begin{align}
  \textstyle\bigwedge\{a \rightarrow b_i;\, i \in I\} &=
  a \rightarrow \textstyle\bigwedge\{b_i;\, i \in I\},
  \\
  \textstyle\bigwedge\{a_i \rightarrow b;\, i \in I\} &=
  \textstyle\bigvee\{a_i;\, i \in I\} \rightarrow b,
  \\
  \textstyle\bigvee\{a \rightarrow b_i;\, i \in I\} &=
  a \rightarrow \textstyle\bigvee\{b_i;\, i \in I\}
  \label{eqn:VEEabi=aVEEbi}
\end{align}
and, in addition, $\Luk$ satisfies
\begin{align}
  \textstyle\bigvee\{a_i \rightarrow b;\, i \in I\} &=
  \textstyle\bigwedge\{a_i;\, i \in I\} \rightarrow b.
\end{align}
More details on the structures of degrees can be found in~\cite{KMP:TN}.

In order to simplify our considerations about if-then rules between
graded attributes, we use $\mathbf{L}$-fuzzy sets and related notions.
Consider a non-empty set $U$ which acts as a universe of elements.
Each map $A\!: U \to L$, where $L$ is the set of truth degrees in $\mathbf{L}$
is called an $\mathbf{L}$-fuzzy set~\cite{Gog:LFS}
(shortly, an $\mathbf{L}$-set) in $U$ and the degree $A(u)$ is interpreted as
``the degree to which $u$ belongs to the $\mathbf{L}$-set $A$''.
The collection of all $\mathbf{L}$-sets
in $U$ is denoted by $L^U$.
Furthermore, $A \in L^U$ is called finite whenever $\{u \in U;\, A(u) > 0\}$
is a finite set; $A \in L^U$ is called a singleton, written $\{{}^{a}/u\}$,
whenever $A(u) = a$ and $A(v) = 0$ for all $v \in U \setminus \{u\}$;
$A \in L^U$ is called rational whenever $\{A(u);\, u \in U\} \subseteq \RatL$
where $\RatL$ denotes the rational unit interval.

We consider operations with $\mathbf{L}$-sets which are defined componentwise
using the operations in $\mathbf{L}$. Namely, for $A_i \in L^U$ ($i \in I$),
$A \in L^U$, and $c \in L$, we define
$\bigcap \{A_i;\, i \in I\}$ (the intersection of $A_i$'s),
$\bigcup \{A_i;\, i \in I\}$ (the union of $A_i$'s),
$c \otimes A$ (the $c$-multiple of $A$), and
$c \rightarrow A$ (the $c$-shift of $A$) by putting
\begin{align}
  \textstyle\bigl(\bigcap \{A_i;\, i \in I\}\bigr)(u) &=
  \textstyle\bigwedge\{A_i(u);\, i \in I\},
  \label{eqn:op_intersection} \\
  \textstyle\bigl(\bigcup \{A_i;\, i \in I\}\bigr)(u) &=
  \textstyle\bigvee\{A_i(u);\, i \in I\},
  \label{eqn:op_union} \\
  (c {\otimes} A)(u) &= c \otimes A(u),
  \label{eqn:op_c_multiple} \\
  (c {\rightarrow} A)(u) &= c \rightarrow A(u),
  \label{eqn:op_c_shift}
\end{align}
for all $u \in U$. In addition, if $|I| = 2$, we use the usual
infix notation $A \cap B$ and $A \cup B$ to denote~\eqref{eqn:op_intersection}
and~\eqref{eqn:op_union}, respectively.

For $A,B \in L^U$, we consider two basic types of
containment relations. Namely, a bivalent containment relation and
a graded containment relation. In the first case, we write 
$A \subseteq B$ and say that $A$ is fully contained in $B$
whenever $A(u) \leq B(u)$ holds for each $u \in U$.
In the second case, we define a degree $S(A,B) \in L$
to which $A$ is a subset of $B$ by
\begin{align}
  S(A,B) = \textstyle\bigwedge\{A(u) \rightarrow B(u);\, u \in U\}.
  \label{eqn:S}
\end{align}
Using the properties of $\mathbf{L}$, it is easily seen that
$A \subseteq B$ if{}f $A(u) \rightarrow B(u) = 1$ holds for all $u \in U$
which is true if{}f $S(A,B) = 1$.
See~\cite[Theorem 3.12]{Bel:FRS}
for further details on properties of graded subsethood.

In the paper, we utilize particular fuzzy closure operators.
Recall that an operator $\C\!: L^U \to L^U$ is called an
$\mathbf{L}$-closure operator~\cite{Be:Fco,RoEsGaGo:Icoar}
on the set $U$ whenever the following conditions
\begin{align}
  A &\subseteq \C(A),
  \label{eqn:cl_ext} \\
  S(A,B) &\leq S(\C(A),\C(B)),
  \label{eqn:cl_mon} \\
  \C(\C(A)) &= \C(A),
  \label{eqn:cl_idm}
\end{align}
are satisfied for all $A,B \in L^U$,
cf. also~\cite[Section 7.1]{Bel:FRS}. A non-empty system
$\mathcal{S} \subseteq L^U$ (of $\mathbf{L}$-sets in $U$) is
called directed whenever for any $A,B \in \mathcal{S}$
there is $C \in \mathcal{S}$ such that $A \subseteq C$ and $B \subseteq C$.

\section{Rational fuzzy attribute logic}\label{sec:rfal}
In this section, we introduce the rational fuzzy attribute logic (RFAL).
We describe its language, formulas, their semantic entailment and a graded notion
of provability. We present the Pavelka completeness of our logic which is proved
in the next section.

The language of our logic is given by
\begin{itemize}
\item
  a (denumerable) set $\mathrm{Var}$ of \emph{propositional variables}, and
\item
  the set of \emph{constants for rational truth degrees}, i.e.,
  for each $a \in \RatL$ we consider a constant $\overline{a}$.
\end{itemize}
Propositional variables in $\mathrm{Var}$ are denoted by $p,q,\ldots$ Note that
the distinction between rational truth degrees in $[0,1]$ and the constants for
the degrees is essentially the same as in RPL~\cite{Haj:Flah} and is motivated
by proper distinction between the syntax and semantics of our logic.

Using symbols for logical connectives $\logand$ (conjunction) and
$\Rightarrow$ (implication), we consider formulas of the following form
\begin{align}
  ((\overline{a_1} \Rightarrow p_1) \logand \cdots \logand
  (\overline{a_m} \Rightarrow p_m))
  \Rightarrow
  ((\overline{b_1} \Rightarrow q_1) \logand \cdots \logand
  (\overline{b_n} \Rightarrow q_n)),
  \label{eqn:fml}
\end{align}
where $p_1,\ldots,p_m,q_1,\ldots,q_n \in \mathrm{Var}$ and 
$\overline{a_1},\ldots,\overline{a_m},\overline{b_1},\ldots,\overline{b_n}$ are
constants for truth degrees. We call such formulas
\emph{rational fuzzy attribute implications.} Note that according to the
standard meaning of $\logand$ and $\Rightarrow$ which is used in fuzzy logics
in the narrow sense, \eqref{eqn:fml} can be understood as a formula expressing
the fact that ``if $p_1$ is true at least to degree $a_1$ and ${\cdots}$ and
$p_n$ is true at least to degree $a_n$, then $q_1$ is true at least to degree
$b_1$ and ${\cdots}$ and $q_n$ is true at least to degree $b_n$'' which corresponds
with the intended meaning of the rules we have described in Section~\ref{sec:intro}.
Also note that~\eqref{eqn:fml} is a well-formed formula in a language of MTL (or
a stronger logic) which is extended by constants for truth degrees.

In order to define the semantics of formulas like~\eqref{eqn:fml}, we use the
usual notion of evaluation of propositional variables which uniquely extends to
all formulas, including those in the form of~\eqref{eqn:fml}. In a more detail,
we call any map $e\!: \mathrm{Var} \to [0,1]$ an \emph{evaluation} (notice that
$e(p)$ may be irrational) and define the degree $||\varphi||_e$ to which a general
well-formed formula in our language is true under $e$:
\begin{align}
  ||p||_e &= e(p), \\
  ||\varphi \Rightarrow \psi||_e &= ||\varphi||_e \rightarrow ||\psi||_e, \\
  ||\varphi \logand \psi||_e &= ||\varphi||_e \wedge ||\psi||_e, \\
  ||\overline{a}||_e &= a,
\end{align}
for all $p \in \mathrm{Var}$, formulas $\varphi,\psi$, and $a \in \RatL$
(other connectives may be introduces but we do not need them for our development,
cf.~\cite{EsGo:MTL,Haj:MFL}). If $\varphi$ is~\eqref{eqn:fml},
then $||\varphi||_e \in [0,1]$ is
the \emph{degree to which $\varphi$ is true under $e$.}
The notion of degrees of semantic entailment of formulas is then defined as
follows: An evaluation $e$ is called a \emph{model} of a set $\Sigma$ of formulas
whenever $||\varphi||_e = 1$ for all $\varphi \in \Sigma$.
The \emph{degree $||\varphi||_\Sigma$ to which $\varphi$ is entailed by $\Sigma$}
is defined
\begin{align}
  ||\varphi||_\Sigma &=
  \textstyle\bigwedge\{||\varphi||_e;\, e \text{ is a model of } \Sigma\}.
\end{align}
In this paper, we are primarily interested in syntactic characterization of 
$||\varphi||_\Sigma$ for $\varphi$ and all formulas in $\Sigma$ being of the
form~\eqref{eqn:fml}.

Before we introduce our axiomatization, let us present a concise way of representing
formulas like~\eqref{eqn:fml} and their entailment. Since both the antecedent and
consequent of~\eqref{eqn:fml} are of the form of conjunctions of subformulas
(and $\logand$ is interpreted by a truth function which is commutative,
associative, and idempotent), we may encode the antecedents and consequents
by finite rational $\mathbf{L}$-sets in $\mathrm{Var}$. Indeed,
for~\eqref{eqn:fml}, we may consider $A,B \in L^\mathrm{Var}$ given by
\begin{align}
  A(p) &=
  \begin{cases}
    a_i, &\text{if } p \text{ equals } p_i \text{ for } 1 \leq i \leq m, \\
    0, &\text{otherwise,}
  \end{cases}
  \label{eqn:Ap}
  \\
  B(p) &= 
  \begin{cases}
    b_i, &\text{if } p \text{ equals } q_i \text{ for } 1 \leq i \leq n, \\
    0, &\text{otherwise,}
  \end{cases}
  \label{eqn:Bp}
\end{align}
for all $p \in \mathrm{Var}$.
Under this notation, \eqref{eqn:fml} may be written as
\begin{align}
  A \Rightarrow B
  \label{eqn:abbrev}
\end{align}
and the degree $||A \Rightarrow B||_e$ to which $A \Rightarrow B$ is
true under $e$ may be defined as
\begin{align}
  ||A \Rightarrow B||_e &= S(A,e) \rightarrow S(B,e),
  \label{eqn:abbrev_sem}
\end{align}
where $S(A,e)$ and $S(B,e)$ are subsethood degrees~\eqref{eqn:S}.
A moment's reflection shows that if $\varphi$ denotes~\eqref{eqn:fml}
and $A \Rightarrow B$ is the corresponding abbreviation of~\eqref{eqn:fml}
with $A$ and $B$ given by~\eqref{eqn:Ap} and~\eqref{eqn:Bp}, respectively, then
\begin{align}
  ||\varphi||_e = ||A \Rightarrow B||_e
\end{align}
for any evaluation $e\!: \mathrm{Var} \to [0,1]$. Therefore, in the rest
of the paper, we use primarily the abbreviated form~\eqref{eqn:abbrev}
of formulas (where both $A,B$ are finite and rational $\mathbf{L}$-sets
in $\mathrm{Var}$) which simplifies some of our considerations.

\begin{remark}
  The first approach to attribute implications between graded (fuzzy) attributes
  was introduced by Pollandt~\cite{Po:FB}. She studied the formulas mainly from
  the point of view of formal concept analysis~\cite{GaWi:FCA} of data with fuzzy
  attributes and she did not present any complete axiomatization of the proposed
  semantic entailment. Conceptually, the formulas used by
  Pollandt~\cite{Po:FB} correspond to~\eqref{eqn:abbrev} and its interpretation
  as in~\eqref{eqn:abbrev_sem}. Later, the approach was generalized by considering
  truth-stressing hedges~\cite{BeVy:Fcalh,EsGoNo:Hedges,Haj:Ovt} as
  additional parameters of the interpretation of the formulas,
  see~\cite{BeVy:ICFCA} for a survey and \cite{BeVy:ADfDwG} for recent results.
  The utilization of hedges proved interesting
  because stronger (and desirable) properties of fuzzy attributes implications
  (like uniqueness of minimal bases obtained via pseudo-intents~\cite{BeVy:ADfDwG,GuDu})
  result by specific choices of hedges (e.g., by the choice of
  globalization~\cite{TaTi:Gist} which on linear structures of degrees coincides
  with Baaz's $\Delta$, cf.~\cite{Baaz}). Even more
  general concept of a parameterization of fuzzy attribute implications is
  introduced in~\cite{Vy:Pasefai} based on algebras of isotone Galois connections.
\end{remark}

From now on, we use the following assumption.

\begin{assumption}\label{ass:rat}
  $\mathbf{L} = \langle L,\wedge,\vee,\otimes,\rightarrow,0,1\rangle$ is
  a complete residuated lattice on the real unit interval and $\otimes$ and
  $\rightarrow$ are defined so that for any $a,b \in \RatL$, we have
  $a \otimes b \in \RatL$ and $a \rightarrow b \in \RatL$.
  In this case, we say that $\mathbf{L}$ is \emph{rationally closed.}
\end{assumption}

The assumption says basically that $\otimes$ and $\rightarrow$ applied to rational
arguments yield rational results. This is obviously satisfied for both $\Luk$ and
$\Gog$. As a consequence, if $A \in L^U$ and $c$ are rational, then $c {\otimes} A$
and $c {\rightarrow} A$ given by \eqref{eqn:op_c_multiple}
and \eqref{eqn:op_c_shift}, respectively, are also rational.

Our logic uses a deductive system which consists of a single axiom scheme
and two deduction rules:
\begin{enumerate}\parskip=0pt
\item[\itm{1}]
  each formula of the form $A{\cup}B \Rightarrow B$ where
  $A,B$ are finite rational $\mathbf{L}$-sets in $\mathrm{Var}$ is an \emph{axiom};
\item[\itm{2}]
  the deduction rules of \emph{cut} and \emph{multiplication}:
  \bgroup%
  \addtolength{\leftmarginii}{1em}%
  \begin{enumerate}\parskip=0pt
  \item[(Cut)]
    from $A \Rightarrow B$ and $B{\cup}C \Rightarrow D$
    infer $A{\cup}C \Rightarrow D$,
  \item[(Mul)]
    from $A \Rightarrow B$
    infer $c{\otimes}A \Rightarrow c{\otimes}B$,
  \end{enumerate}
  for all finite rational $\mathbf{L}$-sets $A,B,C,D$ in $\mathrm{Var}$
  and $c \in \RatL$.
  \egroup%
\end{enumerate}

Observe that $c{\otimes}A$ and $c{\otimes}B$ are both rational and finite, i.e.,
(Mul) is a well-defined deduction rule. Analogously, the axioms and (Cut) are well
defined because a union of finitely many rational finite $\mathbf{L}$-sets is
a finite rational $\mathbf{L}$-set.

\begin{remark}
  (a)
  Our deduction system resembles the famous system by Armstrong~\cite{Arm:Dsdbr}
  which in database systems plays a central role in reasoning about
  functional dependencies and normalized database schemes~\cite{Mai:TRD}.
  Namely, if we replace $\mathbf{L}$-sets by ordinary sets,
  the axioms together with (Cut)
  form a system which is equivalent to that of Armstrong. Note that in database
  systems (Cut) is usually presented under the name
  pseudotransitivity~\cite{Hol,Mai:TRD}. Deductive systems for fuzzy attribute
  implications based on (Mul) and (Cut) are also present
  in~\cite{BeVy:ICFCA,BeVy:ADfDwG} and generalized
  in \cite{BeVy:Falcrl,KuVy:Flprrai} by considering infinitary deduction rules
  to ensure completeness over arbitrary infinite $\mathbf{L}$.

  (b)
  Since the axioms and deduction rules we deal with use abbreviations for
  formulas like~\eqref{eqn:fml}, it is worth noting that our axioms and
  deduction rules may also be understood as formulas and deduction rules in
  the ordinary (narrow) sense. We may prove that in MTL (or a stronger logic)
  enriched by constants for rational truth degrees and bookkeeping axioms,
  the axiom schemes are provable, and (Mul) and (Cut) are derived deduction
  rules. In this remark, let $\mathcal{C}$ denote a logic which results from
  MTL (or a stronger logic) by adding constants for rational truth degrees
  and bookkeeping axioms which ensure that
  $\overline{a \rightarrow b} \Leftrightarrow \overline{a} \Rightarrow \overline{b}$,
  $\overline{a \otimes b} \Leftrightarrow \overline{a} \logtnorm \overline{b}$,
  $\overline{a \wedge b} \Leftrightarrow \overline{a} \loginf \overline{b}$,
  and
  $\overline{a \vee b} \Leftrightarrow \overline{a} \logsup \overline{b}$
  are all provable by $\mathcal{C}$. We inspect the following cases:
  \begin{itemize}
  \item[--]
    All axioms are provable by $\mathcal{C}$.
    First, observe that $A{\cup}B \Rightarrow B$
    can equivalently be written as $C \Rightarrow B$ where $C(p) \geq B(p)$
    for all $p \in \mathrm{Var}$.
    Now, fix $p$ and let $C(p) = a$ and $B(p) = b$. 
    Using the bookkeeping axioms together with $b \rightarrow a = 1$,
    we get
    
    $\vdash_\mathcal{C} \overline{b} \Rightarrow \overline{a}$

    and as a consequence of the transitivity of implication, we get

    $\vdash_\mathcal{C}
    (\overline{a} \Rightarrow p) \Rightarrow (\overline{b} \Rightarrow p)$.

    Now, we may repeat the idea finitely many times for all $p$
    with $C(p) > 0$ and utilize 

    $\vdash_\mathcal{C} (\varphi_1 \Rightarrow \psi_1) \Rightarrow
    ((\varphi_2 \Rightarrow \psi_2) \Rightarrow
    ((\varphi_1 \logand \varphi_2) \Rightarrow (\psi_1 \logand \psi_2)))$

    in order to show that $\mathcal{C}$ proves each axiom of our deductive system.

  \item[--]
    In case of (Cut), observe that we have

    $\{\varphi \Rightarrow \psi, (\psi \logand \chi) \Rightarrow \vartheta\}
    \vdash_{\mathcal{C}}
    (\varphi \logand \chi) \Rightarrow \vartheta$,

    where $\varphi,\psi,\chi,\vartheta$ are arbitrary formulas.
    In addition, if a formula of the form~\eqref{eqn:fml} is abbreviated by
    $B{\cup}C \Rightarrow D$, then it is equivalent to
    $(\psi \logand \chi) \Rightarrow \vartheta$ where 
    $\psi \Rightarrow \vartheta$ is abbreviated by $B \Rightarrow D$ and
    $\chi \Rightarrow \vartheta$ is abbreviated by $C \Rightarrow D$.
    Indeed, observe that $(\overline{a} \Rightarrow \varphi) \logand
    (\overline{c} \Rightarrow \varphi)$ is provably equivalent to 
    $(\overline{a} \logsup \overline{c}) \Rightarrow \varphi$, i.e.,
    $\overline{a \vee b} \Rightarrow \varphi$ by the bookkeeping axioms.
    As a consequence, the derived formula
    $(\varphi \logand \chi) \Rightarrow \vartheta$ may be abbreviated
    by $A{\cup}C \Rightarrow D$ provided that $\varphi \Rightarrow \psi$ is
    abbreviated by $A \Rightarrow B$. Altogether, the results of (Cut) are
    derivable in $\mathcal{C}$.

  \item[--]
    Finally, for any $\varphi$, $\psi$, and $c \in \RatL$, we have
    
    $\{\varphi \Rightarrow \psi\} \vdash_\mathcal{C}
    (\overline{c} \Rightarrow \varphi) \Rightarrow (\overline{c} \Rightarrow \psi)$.

    Furthermore, if $\varphi \Rightarrow \psi$ is of the form~\eqref{eqn:fml},
    then we may use

    $\vdash_{\mathcal{C}}
    (\overline{c} \Rightarrow (\varphi_1 \logand \cdots \logand \varphi_n))
    \Leftrightarrow
    ((\overline{c} \Rightarrow \varphi_1) \logand \cdots \logand
    (\overline{c} \Rightarrow \varphi_n))$.

    Thus, using the bookkeeping axioms together with

    $\vdash_{\mathcal{C}}
    (\varphi \Rightarrow (\psi \Rightarrow \chi)) \Leftrightarrow
    ((\varphi \logtnorm \psi) \Rightarrow \chi)$,

    we conclude that (Mul) is a derivable deduction rule in $\mathcal{C}$.
  \end{itemize}

  (c)
  Note that there are several deductive systems equivalent to the system
  of our axioms, (Cut), and (Mul). For instance, (Cut) and (Mul) can be
  reduced to a single deduction rule~\cite{BeVy:ADfDwG}. Alternatively,
  one may introduce normalized proofs based on the rule of accumulation
  analogously as in~\cite{BeVy:MRAP} (cf. also~\cite{Ma:Mcrdm}). Other
  deductive systems may involve the rule of simplification instead of
  (Cut), see~\cite{BeCoEnMoVy:SFL} (cf. also~\cite{SL} and
  a similar deduction rule proposed by Darwen in \cite{DaDa:RDW89_91}).
\end{remark}

Considering our deduction system, we introduce the notions of proofs and
provability degrees.
A \emph{proof} of $A \Rightarrow B$ by $\Sigma$ is any sequence of formulas
$C_1 \Rightarrow D_1,\ldots,C_n \Rightarrow D_n$ such that
$C_n = A$, $D_n = B$, and for each $i=1,\ldots,n$, we have that
\begin{enumerate}\parskip=0pt
\item[\itm{1}]
  $C_i \Rightarrow D_i$ is an axiom, or
\item[\itm{2}]
  $C_i \Rightarrow D_i \in \Sigma$, or
\item[\itm{3}]
  $C_i \Rightarrow D_i$ results from some
  of the formulas in $\{C_j \Rightarrow D_j;\, j < i\}$
  by a single application of the deduction rule (Cut) or (Mul).
\end{enumerate}
If there is a proof of $A \Rightarrow B$ by $\Sigma$, we denote the
fact by $\Sigma \vdash A \Rightarrow B$ and call $A \Rightarrow B$
\emph{provable} by $\Sigma$.
The \emph{degree $|A \Rightarrow B|_\Sigma$
  to which $A \Rightarrow B$ is provable by $\Sigma$} is defined by
\begin{align}
  |A \Rightarrow B|_\Sigma &=
  \textstyle\bigvee\{c \in \RatL;\, \Sigma \vdash A \Rightarrow c{\otimes}B\}.
  \label{eqn:synent}
\end{align}

\begin{remark}
  The provability degrees~\eqref{eqn:synent} are defined in much the same way as
  in RPL, i.e., via an ordinary notion of provability and without considering proofs
  as sequences of weighted formulas as in the original
  Pavelka approach~\cite{Pav:Ofl1,Pav:Ofl2,Pav:Ofl3}. Indeed, in RPL~\cite{Haj:Flah},
  the degrees of provability are introduced as
  \begin{align}
    |\varphi|^\mathrm{RPL}_\Sigma &=
    \textstyle\bigvee\{c \in \RatL;\,
    \Sigma \vdash_\mathrm{RPL} \overline{c} \Rightarrow \varphi\}.
  \end{align}
  Also note that if $\mathbf{L}$ is $\Luk$, RPL provides a syntactic characterization
  of the semantic entailment of the formulas considered in our logic.
  Indeed, for $\varphi \Rightarrow \psi$ abbreviated by $A \Rightarrow B$,
  one can easily see that
  \begin{align}
    ||A \Rightarrow B||_\Sigma &=
    |\varphi \Rightarrow \psi|^\mathrm{RPL}_\Sigma.
  \end{align}
  However, RPL-based proofs
  of $\overline{c} \Rightarrow (\varphi \Rightarrow \psi)$ by $\Sigma$ may
  contain formulas which are not of the form~\eqref{eqn:fml}.
  In contrast, our deductive system, allows to infer only
  (abbreviated representations of) formulas like~\eqref{eqn:fml}.
  As a consequence, the deduction in our system is simpler than the
  deduction in RPL and enables automated deduction as we shall see
  in Section~\ref{sec:details}. Analogous remarks can be made for
  R$\Pi$L (but notice that R$\Pi$L has an infinitary
  deduction rule~\cite{EsGoHaNa:Rflin}).
\end{remark}

The proof of the following assertion is elaborated in Section~\ref{sec:details}.

\begin{theorem}[Pavelka completeness of RFAL]\label{th:completeness}
  Let\/ $\mathbf{L}$ satisfy Assumption~\ref{ass:rat}
  and condition~\eqref{eqn:VEEabi=aVEEbi}.
  Then, for any $\Sigma$ and $A \Rightarrow B$,
  \begin{align}
    |A \Rightarrow B|_\Sigma = ||A \Rightarrow B||_\Sigma.
    \label{eqn:compl}
  \end{align}
  In particular, \eqref{eqn:compl} holds
  for $\mathbf{L}$ being $\Luk$ or $\Gog$.
  \qed
\end{theorem}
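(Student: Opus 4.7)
The proof splits into the standard soundness direction and the harder completeness direction through a canonical syntactic closure. For soundness I would verify by a straightforward induction on the length of a proof that every axiom is true to degree $1$ under any model $e$ of $\Sigma$ and that both rules preserve degree-$1$ validity of the premises. The axiom $A{\cup}B \Rightarrow B$ is valid because $B \subseteq A{\cup}B$ forces $S(A{\cup}B,e) \leq S(B,e)$; (Cut) preserves validity via $S(A{\cup}C,e) = S(A,e) \wedge S(C,e)$ and transitivity of $\leq$; and (Mul) preserves validity via the identity $S(c{\otimes}A,e) = c \rightarrow S(A,e)$ together with isotonicity of $\rightarrow$ in its second argument. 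Together this shows that $\Sigma \vdash A \Rightarrow c{\otimes}B$ implies $c \leq ||A \Rightarrow B||_e$ for every model $e$ of $\Sigma$, whence $|A \Rightarrow B|_\Sigma \leq ||A \Rightarrow B||_\Sigma$.

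For the converse I would first establish three derived rules by short manipulations of (Cut) with the axiom scheme: \emph{antecedent weakening} ($A \Rightarrow B$ yields $A{\cup}C \Rightarrow B$, via (Cut) with the axiom $B{\cup}C \Rightarrow B$), \emph{consequent weakening} ($A \Rightarrow B$ yields $A \Rightarrow B'$ for any finite rational $B' \subseteq B$), and \emph{additivity} ($A \Rightarrow B$ and $A \Rightarrow C$ yield $A \Rightarrow B{\cup}C$, by two applications of (Cut) routed through the axiom $B{\cup}C \Rightarrow B{\cup}C$). With these in hand, fix the antecedent $A$ and define the canonical syntactic closure $e^*(p) = \bigvee\{c \in \RatL;\, \Sigma \vdash A \Rightarrow \{{}^{c}/p\}\}$. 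The axiom $A \Rightarrow A$ together with consequent weakening gives $A \subseteq e^*$, and hence $S(A,e^*) = 1$.

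The main technical lemma is that $e^*$ is a model of $\Sigma$. Fix $C \Rightarrow D \in \Sigma$, set $\alpha = S(C,e^*)$, and pick a rational $\epsilon < \alpha$. For each $q$ with $C(q) > 0$, the identity~\eqref{eqn:VEEabi=aVEEbi} rewrites $C(q) \rightarrow e^*(q)$ as $\bigvee\{C(q) \rightarrow c;\, \Sigma \vdash A \Rightarrow \{{}^{c}/q\}\}$, so $\epsilon < C(q) \rightarrow e^*(q)$ produces a rational $c_q$ with $\Sigma \vdash A \Rightarrow \{{}^{c_q}/q\}$ and $\epsilon \otimes C(q) \leq c_q$. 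Combining the finitely many such derivations by additivity and trimming by consequent weakening yields $\Sigma \vdash A \Rightarrow \epsilon{\otimes}C$; one application of (Mul) to $C \Rightarrow D$ followed by (Cut) then gives $\Sigma \vdash A \Rightarrow \epsilon{\otimes}D$. Hence $\epsilon \otimes D(p) \leq e^*(p)$ for every $p$, i.e., $\epsilon \leq S(D,e^*)$, and letting $\epsilon$ range over rationals below $\alpha$ gives $\alpha \leq S(D,e^*)$ as required.

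Finally, $e^*$ is a model of $\Sigma$ with $S(A,e^*) = 1$, so $||A \Rightarrow B||_\Sigma \leq ||A \Rightarrow B||_{e^*} = S(B,e^*)$. A verbatim repetition of the witness-extraction argument---now using $B$ in place of $C$ and omitting the intermediate (Mul) step---shows that each rational $\epsilon < S(B,e^*)$ satisfies $\Sigma \vdash A \Rightarrow \epsilon{\otimes}B$, whence $\epsilon \leq |A \Rightarrow B|_\Sigma$ and $S(B,e^*) \leq |A \Rightarrow B|_\Sigma$ follows. The principal obstacle is the canonical model lemma: the rational suprema defining $e^*$ need not be attained, which forces us to work with strict rational approximations $\epsilon < \alpha$. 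This is precisely where rational closedness (so that $\epsilon{\otimes}C$ remains a finite rational $\mathbf{L}$-set on which (Mul) acts and consequent weakening applies) and the join-distributivity~\eqref{eqn:VEEabi=aVEEbi} of the residuum over suprema---needed to pull the sup outside $C(q) \rightarrow (-)$ and extract the witnesses $c_q$---are indispensable.
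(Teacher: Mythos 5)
Your proposal is correct, and its core is the same as the paper's: your canonical evaluation $e^*(p)=\bigvee\{c\in\RatL;\,\Sigma\vdash A\Rightarrow\{{}^{c}/p\}\}$ is exactly the evaluation $e(p)=|A\Rightarrow\{{}^{1}/p\}|_\Sigma$ constructed in the paper's Lemma~\ref{le:compl}, and in both arguments completeness follows from $S(A,e^*)=1$, $e^*$ being a model of $\Sigma$, and $S(B,e^*)\leq|A\Rightarrow B|_\Sigma$, with \eqref{eqn:VEEabi=aVEEbi} and rational closedness entering at the same points. The difference is in how these facts are verified. The paper first establishes exact identities for provability degrees --- the $c$-shift law \eqref{eqn:c-shift_syn}, the finite-union law \eqref{eqn:union_syn} (whose proof uses complete distributivity of $[0,1]$), and transitivity \eqref{eqn:o_trans} --- then derives the equality $S(C,e)=|A\Rightarrow C|_\Sigma$ for every finite rational $C$ and packages the key step contrapositively (unprovability of $A\Rightarrow c{\otimes}B$ forces $||A\Rightarrow B||_\Sigma\leq c$). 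You instead work only with one-sided inequalities: a strict rational $\epsilon$ below the relevant subsethood degree, witness extraction under the suprema via \eqref{eqn:VEEabi=aVEEbi}, and explicit derivations of additivity and projectivity from (Cut) and the axiom scheme (which the paper imports from the literature). This makes your argument somewhat more elementary and self-contained --- no appeal to complete distributivity and no external lemma for the derived rules --- at the price of not isolating the identities $c\rightarrow|A\Rightarrow B|_\Sigma=|A\Rightarrow c{\otimes}B|_\Sigma$ and $\bigwedge\{|A\Rightarrow B_i|_\Sigma;\,i\in I\}=|A\Rightarrow\bigcup\{B_i;\,i\in I\}|_\Sigma$, which the paper states as reusable facts. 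Your soundness half matches the ``standard arguments'' the paper alludes to, so the proposal is sound as written.
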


Using the results on RPL~\cite{Haj:MFL}, we may get further insight
into the previous completeness theorem which is based on our axiomatization.
Namely, \cite[Lemma 3.3.17]{Haj:MFL} gives that if $\Sigma$ is finite,
then each $||A \Rightarrow B||_\Sigma$ and thus $|A \Rightarrow B|_\Sigma$
is rational (for $\mathbf{L}$ being $\Luk$).
Let us note that the proof of \cite[Lemma 3.3.17]{Haj:MFL}
is technically quite involved. In Section~\ref{sec:details},
we show that in case of our logic, which considers only formulas
in the special form~\eqref{eqn:fml}, the argument is considerably easier.
The most important consequence of this property is summarized in
the following assertion which is also proved in Section~\ref{sec:details}.

\begin{theorem}\label{th:decidable}
  Let $\mathbf{L}$ be either of $\Luk$ or $\Gog$.
  Then, for each finite theory $\Sigma$ and any $A \Rightarrow B$,
  the degree $|A \Rightarrow B|_\Sigma$ is rational. In addition,
  we have
  \begin{align}
    &\Sigma \vdash A \Rightarrow c{\otimes}B
    &&\text{ for } c = |A \Rightarrow B|_\Sigma.
  \end{align}
  Moreover, RFAL based on either of\/ $\Luk$ and $\Gog$ is decidable.
  \qed
\end{theorem}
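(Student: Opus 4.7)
The plan is to reduce every assertion to one structural fact: for a finite theory $\Sigma$ and any finite rational $A$, the least model $\Sigma(A)$ of $\Sigma$ with $A\subseteq\Sigma(A)$ is a finite rational $\mathbf{L}$-set that is effectively computable from $\Sigma$ and $A$. By Theorem~\ref{th:completeness} it suffices to analyse $||A\Rightarrow B||_\Sigma$ in place of $|A\Rightarrow B|_\Sigma$, and a least-model characterization (to be developed in Section~\ref{sec:clops}) gives
\begin{align*}
  ||A\Rightarrow B||_\Sigma = S(B,\Sigma(A))
\end{align*}
via a short semantic calculation based on adjointness together with the fact that an intersection of models of $\Sigma$ is again a model.

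The least model is the limit of the increasing iteration $X_0 = A$ and
\begin{align*}
  X_{k+1} = X_k \cup \textstyle\bigcup\{S(A',X_k)\otimes B';\, A'\Rightarrow B'\in\Sigma\}.
\end{align*}
Each $X_k$ has support contained in the finite set of variables occurring in $A$ or $\Sigma$, and by Assumption~\ref{ass:rat} each $X_k(p)$ is rational. The main obstacle is termination of this iteration, where the two structures split. For $\Luk$, the affine form of $\otimes_\Luk$ and $\rightarrow_\Luk$ confines every value produced to $\tfrac{1}{N}\mathbb{Z}\cap[0,1]$, where $N$ is the least common multiple of the denominators appearing in $\Sigma$ and $A$; a non-decreasing iteration on a finite scale stabilises in finitely many steps. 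For $\Gog$ the situation is more delicate because multiplication can \emph{a priori} produce rationals with unbounded denominators. The strategy is to work inside the finitely generated multiplicative subgroup of $\mathbb{Q}_{>0}$ determined by the constants of $\Sigma$ and $A$, and to bound the exponents of its generators that can appear in any iterate by a quantity depending only on $\Sigma$ and $A$; this forces the non-decreasing sequence of values on each variable to stabilise after finitely many rounds. In both cases $\Sigma(A)$ is reached after a computable number of iterations, which is the key technical step.

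Given $\Sigma(A)$, the degree $c:=S(B,\Sigma(A))=\textstyle\bigwedge\{B(p)\rightarrow\Sigma(A)(p);\,p\in\mathrm{Var}\}$ is a finite meet of rationals (nonzero contributions only from $p$ with $B(p)>0$), hence rational by Assumption~\ref{ass:rat}, and rationality of $|A\Rightarrow B|_\Sigma$ follows from Theorem~\ref{th:completeness}. The iteration is also read off as a derivation: assuming inductively $\Sigma\vdash A\Rightarrow X_k$, for each $A'\Rightarrow B'\in\Sigma$ let $c':=S(A',X_k)$; (Mul) yields $c'\otimes A'\Rightarrow c'\otimes B'$, and since $c'\otimes A'\subseteq X_k$ (by the defining property of $S$) the axiom $X_k\Rightarrow c'\otimes A'$ combined with (Cut) against the induction hypothesis gives $\Sigma\vdash A\Rightarrow c'\otimes B'$; an accumulation rule derivable from (Cut) and the axioms (from $A\Rightarrow B_1$ and $A\Rightarrow B_2$ infer $A\Rightarrow B_1\cup B_2$) collects these into $\Sigma\vdash A\Rightarrow X_{k+1}$, whence by induction $\Sigma\vdash A\Rightarrow\Sigma(A)$. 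Since $c\otimes B\subseteq\Sigma(A)$ by the definition of $c$, the axiom $\Sigma(A)\Rightarrow c\otimes B$ and a final (Cut) give $\Sigma\vdash A\Rightarrow c\otimes B$. Decidability is the algorithmic face of the above: on input a finite $\Sigma$ and $A\Rightarrow B$, compute $\Sigma(A)$ by the terminating iteration and test $B\subseteq\Sigma(A)$; the same procedure in fact returns the rational value $|A\Rightarrow B|_\Sigma$.
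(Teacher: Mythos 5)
Your overall architecture is the same as the paper's: reduce $|A\Rightarrow B|_\Sigma$ to $S(B,\cdot)$ of the least model of $\Sigma$ containing $A$ (the paper's Corollary~\ref{th:least_compl}), compute that model by the non-decreasing iteration~\eqref{eqn:e_n}, mirror the iteration syntactically via (Mul), the axioms, (Cut) and additivity (the paper's Lemma~\ref{le:prov_An}), and read rationality and decidability off the terminating computation. Your \Luk{} termination argument -- all iterates stay on the finite scale $\tfrac{1}{N}\mathbb{Z}\cap[0,1]$ for $N$ a common denominator of the constants in $\Sigma$ and $A$ -- is exactly the \Luk{} half of the paper's Lemma~\ref{le:fin}.

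The gap is in the \Gog{} case, which you yourself identify as the key technical step but only announce. The plan to ``bound the exponents of the generators of the finitely generated multiplicative subgroup of $\mathbb{Q}_{>0}$ appearing in any iterate by a quantity depending only on $\Sigma$ and $A$'' is not an argument: the exponents occurring in $X_k$ grow with $k$ for as long as the iteration keeps strictly increasing, so bounding them is essentially equivalent to the termination claim itself, and nothing you state supplies such a bound independently. In particular, monotonicity plus the bound $X_k(p)\le 1$ does not do it, because a finitely generated subgroup of $\mathbb{Q}_{>0}$ (say, generated by $\tfrac12$ and $\tfrac13$) meets $(0,1]$ in a set containing infinite strictly increasing sequences whose members require unboundedly large exponents; so an eternally climbing iterate is not excluded on these grounds alone. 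What is needed -- and what the paper's Lemma~\ref{le:fin} provides -- is a pigeonhole/cycle argument specific to the dynamics: if some variable increased infinitely often, finiteness of $\Sigma$ and of the relevant variables forces a cycle $p_0,\ldots,p_{m-1},p_m=p_0$ of rule applications recurring infinitely often, and since each strict increase uses the nontrivial branch of $\rightarrow_\Gog$, every traversal of the cycle multiplies the value at $p_0$ by the fixed rational $c=\bigl(B_0(p_1)\cdots B_{m-1}(p_m)\bigr)/\bigl(A_0(p_0)\cdots A_{m-1}(p_{m-1})\bigr)>1$, driving the values to infinity and contradicting the bound by $1$. Supply this (or an equivalent potential-function argument) and your proof goes through; the remaining steps -- rationality of $S(B,A^n_\Sigma)$, the derivation of $\Sigma\vdash A\Rightarrow c{\otimes}B$ from $\Sigma\vdash A\Rightarrow A^n_\Sigma$ and $c{\otimes}B\subseteq A^n_\Sigma$, and the decision procedure -- match the paper's proof.
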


In particular, for $c=1$ Theorem~\ref{th:decidable} yields
$\Sigma \vdash A \Rightarrow B$ if{}f
$|A \Rightarrow B|_\Sigma = 1$.

\section{Proofs and Notes}\label{sec:details}
In this section, we give proofs of the basic assertions of RFAL presented in
the previous section. In the entire section, we assume that Assumption~\ref{ass:rat}
holds and that $\mathbf{L}$ satisfies~\eqref{eqn:VEEabi=aVEEbi}.

\begin{lemma}
  For any $A \Rightarrow B$ and rational $c \in [0,1]$, we have
  \begin{align}
    c \rightarrow |A \Rightarrow B|_\Sigma = |A \Rightarrow c{\otimes}B|_\Sigma.
    \label{eqn:c-shift_syn}
  \end{align}
\end{lemma}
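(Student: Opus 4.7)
The plan is to verify both inequalities $\leq$ and $\geq$ between $c \rightarrow |A \Rightarrow B|_\Sigma$ and $|A \Rightarrow c{\otimes}B|_\Sigma$ directly from the definition~\eqref{eqn:synent}, using three ingredients: Assumption~\ref{ass:rat} to keep all intermediate constants rational, the adjointness of $\otimes$ and $\rightarrow$ to swap $e{\otimes}c \leq d$ with $e \leq c \rightarrow d$, and the distributivity law~\eqref{eqn:VEEabi=aVEEbi} to pull $c \rightarrow ({-})$ through a supremum.

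For the inequality $|A \Rightarrow c{\otimes}B|_\Sigma \leq c \rightarrow |A \Rightarrow B|_\Sigma$, I would take any rational $e$ witnessing the left-hand supremum, i.e., with $\Sigma \vdash A \Rightarrow e{\otimes}(c{\otimes}B)$. Rewriting $e{\otimes}(c{\otimes}B)$ as $(e{\otimes}c){\otimes}B$ and observing that $e{\otimes}c \in \RatL$ by Assumption~\ref{ass:rat}, I see that $e{\otimes}c$ is itself a legitimate witness in the supremum defining $|A \Rightarrow B|_\Sigma$, so $e{\otimes}c \leq |A \Rightarrow B|_\Sigma$. Adjointness then gives $e \leq c \rightarrow |A \Rightarrow B|_\Sigma$, and taking the supremum over all such $e$ finishes this direction.

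For the reverse inequality, I would first apply~\eqref{eqn:VEEabi=aVEEbi} to obtain
\begin{align*}
  c \rightarrow |A \Rightarrow B|_\Sigma
  = \textstyle\bigvee\{c \rightarrow d;\, d \in \RatL,\ \Sigma \vdash A \Rightarrow d{\otimes}B\}.
\end{align*}
Fixing any such $d$ and setting $e = c \rightarrow d$ (rational by Assumption~\ref{ass:rat}), adjointness yields $e{\otimes}c \leq d$, hence pointwise $(e{\otimes}c){\otimes}B \subseteq d{\otimes}B$. Consequently $d{\otimes}B \Rightarrow (e{\otimes}c){\otimes}B$ is an instance of the axiom scheme $A{\cup}B \Rightarrow B$, and combining it with $\Sigma \vdash A \Rightarrow d{\otimes}B$ via (Cut) gives $\Sigma \vdash A \Rightarrow (e{\otimes}c){\otimes}B = e{\otimes}(c{\otimes}B)$. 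Therefore $e \leq |A \Rightarrow c{\otimes}B|_\Sigma$, and taking the supremum over $d$ yields the claimed inequality.

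I do not expect a real obstacle here; the only points requiring attention are (a) invoking Assumption~\ref{ass:rat} at each step where a new constant is introduced, so that $e{\otimes}c$ and $c \rightarrow d$ remain legal witnesses, and (b) recognising that the axiom-plus-(Cut) step is simply the syntactic version of "weakening the consequent along $\subseteq$," which is exactly what the axiom scheme $A{\cup}B \Rightarrow B$ delivers.
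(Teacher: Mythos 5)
Your proof is correct and takes essentially the same route as the paper: the same split into two inequalities, the same use of~\eqref{eqn:VEEabi=aVEEbi} to pull $c \rightarrow ({-})$ through the supremum, and the same witness manipulations. The only cosmetic differences are that you derive the consequent-weakening step explicitly from the axiom scheme plus (Cut) where the paper invokes the derived rule of projectivity, and you prove the converse inequality witness-by-witness via adjointness where the paper multiplies the supremum by $c$ and uses distributivity of $\otimes$ over suprema.
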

\begin{proof}
  We check both inequalities of~\eqref{eqn:c-shift_syn}.
  
  In order to prove
  $c \rightarrow |A \Rightarrow B|_\Sigma \leq |A \Rightarrow c{\otimes}B|_\Sigma$,
  observe that by the definition of degrees of provability and
  using~\eqref{eqn:VEEabi=aVEEbi}, it follows that
  \begin{align}
    c \rightarrow |A \Rightarrow B|_\Sigma
    &= 
    c \rightarrow \textstyle\bigvee\{d \in [0,1]_\mathbb{Q};\,
    \Sigma \vdash A \Rightarrow d{\otimes}B\}
    \\
    &= 
    \textstyle\bigvee\{c \rightarrow d;\,
    d \in [0,1]_\mathbb{Q} \text{ and }
    \Sigma \vdash A \Rightarrow d{\otimes}B\}
    \\
    &\leq 
    \textstyle\bigvee\{c \rightarrow d;\,
    d \in [0,1]_\mathbb{Q} \text{ and }
    \Sigma \vdash A \Rightarrow (c \rightarrow d){\otimes}c{\otimes}B\}
    \\
    &\leq
    \textstyle\bigvee\{e \in [0,1]_\mathbb{Q};\,
    \Sigma \vdash A \Rightarrow e{\otimes}c{\otimes}B\}
  \end{align}
  because $\Sigma \vdash A \Rightarrow d{\otimes}B$ and
  $(c \rightarrow d) \otimes c \leq d$ yield 
  $\Sigma \vdash A \Rightarrow (c \rightarrow d){\otimes}c{\otimes}B$ by
  projectivity (from $\Sigma \vdash E \Rightarrow F{\cup}G$ one may derive that
  $\Sigma \vdash E \Rightarrow F$, see~\cite[Lemma 4.2]{BeVy:ADfDwG}), i.e.,
  $\Sigma \vdash A \Rightarrow e{\otimes}c{\otimes}B$
  for $e = c \rightarrow d$. Hence,
  \begin{align}
    c \rightarrow |A \Rightarrow B|_\Sigma
    &\leq
    \textstyle\bigvee\{e \in [0,1]_\mathbb{Q};\,
    \Sigma \vdash A \Rightarrow e{\otimes}(c{\otimes}B)\} \\
    &= |A \Rightarrow c{\otimes}B|_\Sigma.
  \end{align}

  Conversely, in order to prove
  $c \rightarrow |A \Rightarrow B|_\Sigma \geq |A \Rightarrow c{\otimes}B|_\Sigma$,
  it suffices to show
  $c \otimes |A \Rightarrow c{\otimes}B|_\Sigma \leq |A \Rightarrow B|_\Sigma$
  which is indeed the case:
  \begin{align}
    c \otimes |A \Rightarrow c{\otimes}B|_\Sigma
    &=
    c \otimes \textstyle\bigvee\{d \in [0,1]_\mathbb{Q};\,
    \Sigma \vdash A \Rightarrow d{\otimes}c{\otimes}B\} \\
    &=
    \textstyle\bigvee\{c \otimes d;\,
    d \in [0,1]_\mathbb{Q} \text{ and }
    \Sigma \vdash A \Rightarrow (c{\otimes}d){\otimes}B\} \\
    &\leq
    \textstyle\bigvee\{e \in [0,1]_\mathbb{Q};\,
    \Sigma \vdash A \Rightarrow e{\otimes}B\} \\
    &= |A \Rightarrow B|_\Sigma,
  \end{align}
  which concludes the proof.
\end{proof}

\begin{lemma}
  For any finite index set $I$ and an $I$-indexed set
  $\{B_i;\, i \in I\}$ of finite rational $\mathbf{L}$-sets in $\mathrm{Var}$,
  we have
  \begin{align}
    \textstyle\bigwedge\{|A \Rightarrow B_i|_\Sigma;\, i \in I\} =
    |A \Rightarrow \textstyle\bigcup\{B_i;\, i \in I\}|_\Sigma
    \label{eqn:union_syn}
  \end{align}
  for any finite rational $\mathbf{L}$-set $A$ in $\mathrm{Var}$.
\end{lemma}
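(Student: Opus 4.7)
The plan is to prove the two inequalities separately, relying on two derived inference rules. The first is \emph{projectivity}, already invoked in the preceding lemma from~\cite[Lemma 4.2]{BeVy:ADfDwG}: from $\Sigma \vdash E \Rightarrow F \cup G$ one derives $\Sigma \vdash E \Rightarrow F$. The second is \emph{additivity}: from $\Sigma \vdash A \Rightarrow B$ and $\Sigma \vdash A \Rightarrow C$, one derives $\Sigma \vdash A \Rightarrow B \cup C$. I would obtain additivity by first cutting the trivial axiom $B \cup C \Rightarrow B \cup C$ with $A \Rightarrow B$ to get $A \cup C \Rightarrow B \cup C$, and then cutting this with $A \Rightarrow C$ to collapse the antecedent back to $A$; a routine induction extends this to any finite family of consequents. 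I will also use the identity $c \otimes \bigcup_i B_i = \bigcup_i (c \otimes B_i)$, which holds componentwise by the left-continuity of $\otimes$.

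For the $\leq$ direction, I would fix any rational $c$ with $\Sigma \vdash A \Rightarrow c \otimes \bigcup_i B_i$, rewrite this as $\Sigma \vdash A \Rightarrow \bigcup_i (c \otimes B_i)$ using the identity above, and apply projectivity componentwise to obtain $\Sigma \vdash A \Rightarrow c \otimes B_i$ for every $i \in I$. Hence $c \leq |A \Rightarrow B_i|_\Sigma$ for every $i$, so $c \leq \bigwedge_i |A \Rightarrow B_i|_\Sigma$, and taking the supremum over all admissible $c$ delivers $|A \Rightarrow \bigcup_i B_i|_\Sigma \leq \bigwedge_i |A \Rightarrow B_i|_\Sigma$.

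The $\geq$ direction is the main obstacle, because the supremum defining each $|A \Rightarrow B_i|_\Sigma$ need not be attained, so one cannot directly pick a single witness per $i$ and combine them. I would instead fix an arbitrary rational $d < c^* := \bigwedge_i |A \Rightarrow B_i|_\Sigma$ and, for each $i \in I$, select a rational $d_i \geq d$ with $\Sigma \vdash A \Rightarrow d_i \otimes B_i$, which exists because $d$ is strictly below the supremum $|A \Rightarrow B_i|_\Sigma$. Applying additivity finitely many times---this is where the finiteness of $I$ is essential, since additivity is only finitary---gives $\Sigma \vdash A \Rightarrow \bigcup_i (d_i \otimes B_i)$. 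Monotonicity of $\otimes$ together with $d \leq d_i$ yields $\bigcup_i (d \otimes B_i) \subseteq \bigcup_i (d_i \otimes B_i)$, so projectivity produces $\Sigma \vdash A \Rightarrow \bigcup_i (d \otimes B_i)$, which by the identity above is $\Sigma \vdash A \Rightarrow d \otimes \bigcup_i B_i$. Hence $d \leq |A \Rightarrow \bigcup_i B_i|_\Sigma$, and the density of $\RatL$ in $[0,1]$ (with the case $c^* = 0$ being trivial) lets me conclude $c^* \leq |A \Rightarrow \bigcup_i B_i|_\Sigma$, completing the proof.
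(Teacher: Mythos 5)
Your proof is correct and follows essentially the same route as the paper: the easy inequality via the identity $c\otimes\bigcup_i B_i=\bigcup_i(c\otimes B_i)$ plus projectivity, and the hard one by choosing per-index witnesses, merging them with (finitary) additivity over the finite $I$, lowering them to a common value by projectivity, and factoring out the common multiplier. The only differences are minor: where the paper invokes complete distributivity of $[0,1]$ and choice functions (taking the minimum $\bigwedge_i f(i)$ of the finitely many witnesses), you approximate the infimum from below by a rational $d$ and use density of $\RatL$, and you derive additivity explicitly rather than citing it; note also that the paper treats $I=\emptyset$ separately, which in your argument needs the one-line remark that $A\Rightarrow\bot$ is an axiom instance.
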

\begin{proof}
  Observe that the claim is trivial for $I = \emptyset$ because
  \begin{align}
    \textstyle\bigwedge\emptyset
    = 1
    = |A \Rightarrow \bot|_\Sigma
    = |A \Rightarrow \textstyle\bigcup\emptyset|_\Sigma,
  \end{align}
  where $\bot(p) = 0$ for all $p \in \mathrm{Var}$. Therefore, we may only
  inspect the situation for non-empty finite $I$. Observe that
  $\textstyle\bigcup\{B_i;\, i \in I\}$ is finite and rational, i.e.,
  the formula $A \Rightarrow \textstyle\bigcup\{B_i;\, i \in I\}$ which appears
  on the right-hand side of~\eqref{eqn:union_syn} is well defined.
  We prove~\eqref{eqn:union_syn} by checking both inequalities.

  By definition of provability degrees and using the complete distributivity
  of the real unit interval (with its natural ordering, see~\cite{DaPr,Raney}),
  we have
  \begin{align}
    \textstyle\bigwedge\{|A \Rightarrow B_i|_\Sigma;\, i \in I\}
    &=
    \textstyle\bigwedge\{\bigvee\{c \in \RatL;\,
    \Sigma \vdash A \Rightarrow c{\otimes}B_i\};\, i \in I\}
    \\
    &=
    \textstyle\bigvee\{\bigwedge\{f(i);\, i \in I\};\, f \in \mathcal{F}\},
    \label{eqn:union1}
  \end{align}
  where
  \begin{align}
    \mathcal{F} &= \textstyle\prod\{\{c \in \RatL;\,
    \Sigma \vdash A \Rightarrow c{\otimes}B_i\};\, i \in I\}
  \end{align}
  is the direct product of subsets
  $\{c \in \RatL;\, \Sigma \vdash A \Rightarrow c{\otimes}B_i\}$ of $\RatL$.
  That is, $\mathcal{F}$ is the set of all maps (choice functions) of the form
  $f\!: I \to \RatL$
  such that 
  \begin{align}
    \Sigma \vdash A \Rightarrow f(i){\otimes}B_i
    \label{eqn:Sigma_f(i)}
  \end{align}
  for all $i \in I$. Since $I$ is finite, \eqref{eqn:Sigma_f(i)} yields
  \begin{align}
    \Sigma \vdash A \Rightarrow \textstyle\bigcup\{f(i){\otimes}B_i;\, i \in I\}
  \end{align}
  by additivity (from $\Sigma \vdash E \Rightarrow F$ and
  $\Sigma \vdash E \Rightarrow G$ it follows that
  $\Sigma \vdash E \Rightarrow F{\cup}G$,
  see~\cite[Lemma 4.2]{BeVy:ADfDwG}). Furthermore, $\{f(i);\, i \in I\}$
  has the least element $\bigwedge\{f(i);\, i \in I\}$ because $\mathbf{L}$
  is linear and $I$ is non-empty and finite. Therefore, by projectivity,
  we get
  \begin{align}
    \Sigma \vdash A \Rightarrow
    \textstyle\bigcup\{\bigwedge\{f(i);\, i \in I\}{\otimes}B_i;\, i \in I\}.
  \end{align}
  Hence, using the distributivity of $\otimes$ over general $\bigcup$,
  it follows that
  \begin{align}
    \Sigma \vdash A \Rightarrow
    \textstyle c{\otimes}\bigcup\{B_i;\, i \in I\}
  \end{align}
  for $c = \bigwedge\{f(i);\, i \in I\} \in \RatL$.
  Therefore, \eqref{eqn:union1} may be extended as
  \begin{align}
    \textstyle\bigwedge\{|A \Rightarrow B_i|_\Sigma;\, i \in I\}
    &=
    \textstyle\bigvee\{\bigwedge\{f(i);\, i \in I\};\, f \in \mathcal{F}\}
    \\
    &\leq
    \textstyle\bigvee\{c \in \RatL;\,
    \Sigma \vdash A \Rightarrow
    \textstyle c{\otimes}\bigcup\{B_i;\, i \in I\}\}
    \\
    &=
    \textstyle|A \Rightarrow \textstyle\bigcup\{B_i;\, i \in I\}|_\Sigma,
  \end{align}
  which proves the ``$\leq$''-inequality of~\eqref{eqn:union_syn}.

  Conversely, using projectivity and distributivity of $\otimes$ over
  general $\bigcup$:
  \begin{align}
    \textstyle|A \Rightarrow \textstyle\bigcup\{B_i;\, i \in I\}|_\Sigma
    &=
    \textstyle\bigvee\{c \in \RatL;\,
    \Sigma \vdash A \Rightarrow c{\otimes}\bigcup\{B_i;\, i \in I\}\}
    \\
    &=
    \textstyle\bigvee\{c \in \RatL;\,
    \Sigma \vdash A \Rightarrow \bigcup\{c{\otimes}B_i;\, i \in I\}\}
    \\
    &\leq
    \textstyle\bigvee\{c \in \RatL;\,
    \Sigma \vdash A \Rightarrow c{\otimes}B_i\}
    \\
    &=
    |A \Rightarrow B_i|_\Sigma
  \end{align}
  for arbitrary $i \in I$ which proves the converse inequality of~\eqref{eqn:union_syn}.
\end{proof}

\begin{lemma}
  For any $A \Rightarrow B$, $B \Rightarrow C$, $A \Rightarrow C$, and $\Sigma$,
  \begin{align}
    |A \Rightarrow B|_\Sigma \otimes |B \Rightarrow C|_\Sigma \leq
    |A \Rightarrow C|_\Sigma.
    \label{eqn:o_trans}
  \end{align}
\end{lemma}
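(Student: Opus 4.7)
The plan is to exploit the deduction rules (Mul) and (Cut) to turn witnesses of the provability degrees $|A \Rightarrow B|_\Sigma$ and $|B \Rightarrow C|_\Sigma$ into witnesses of a lower bound for $|A \Rightarrow C|_\Sigma$, and then pass to suprema using the left-continuity of $\otimes$.

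First I would fix arbitrary rationals $c, d \in \RatL$ such that $\Sigma \vdash A \Rightarrow c{\otimes}B$ and $\Sigma \vdash B \Rightarrow d{\otimes}C$. Applying (Mul) with multiplier $c$ to the second proof yields $\Sigma \vdash c{\otimes}B \Rightarrow c{\otimes}(d{\otimes}C)$, and since $\otimes$ is associative componentwise we have $c{\otimes}(d{\otimes}C) = (c{\otimes}d){\otimes}C$. Next, (Cut) applied to $A \Rightarrow c{\otimes}B$ and $c{\otimes}B \cup \bot \Rightarrow (c{\otimes}d){\otimes}C$ (with the trivial ``side'' $\bot$, the empty $\mathbf{L}$-set) produces $\Sigma \vdash A \Rightarrow (c{\otimes}d){\otimes}C$. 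By Assumption~\ref{ass:rat}, $c{\otimes}d \in \RatL$, so the definition of provability degree gives $c \otimes d \leq |A \Rightarrow C|_\Sigma$.

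Finally I would take suprema. Using the distributivity of $\otimes$ over arbitrary suprema (applied twice — once in each argument), we have
\begin{align}
|A \Rightarrow B|_\Sigma \otimes |B \Rightarrow C|_\Sigma
&= \textstyle\bigl(\bigvee\{c \in \RatL;\, \Sigma \vdash A \Rightarrow c{\otimes}B\}\bigr) \otimes \bigl(\bigvee\{d \in \RatL;\, \Sigma \vdash B \Rightarrow d{\otimes}C\}\bigr) \\
&= \textstyle\bigvee\{c \otimes d;\, c,d \in \RatL,\ \Sigma \vdash A \Rightarrow c{\otimes}B,\ \Sigma \vdash B \Rightarrow d{\otimes}C\} \\
&\leq |A \Rightarrow C|_\Sigma,
\end{align}
where the last inequality follows because each term in the supremum is bounded above by $|A \Rightarrow C|_\Sigma$ by the previous paragraph.

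The only points that require care are the proper invocation of (Cut) — which requires rewriting the second premise as $c{\otimes}B \cup \bot \Rightarrow (c{\otimes}d){\otimes}C$ so that it matches the pattern of the rule — and the bookkeeping with the two supremum distributions, which depend on the left-continuity of $\otimes$ guaranteed in our setting. I do not anticipate a genuine obstacle beyond these routine observations.
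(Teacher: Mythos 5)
Your proof is correct and follows essentially the same route as the paper: apply (Mul) to the second premise, then (Cut) to chain it with the first, obtaining $\Sigma \vdash A \Rightarrow (c{\otimes}d){\otimes}C$, and finally pass to suprema using the distributivity of $\otimes$ over joins. Your explicit handling of (Cut) with the empty side set $\bot$ and the rationality of $c \otimes d$ via Assumption~\ref{ass:rat} are fine points the paper leaves implicit, but the argument is the same.
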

\begin{proof}
  Let $\Sigma \vdash A \Rightarrow b{\otimes}B$ and
  $\Sigma \vdash B \Rightarrow c{\otimes}C$ for some $b,c \in \RatL$.
  Using (Mul), we infer
  $\Sigma \vdash b{\otimes}B \Rightarrow b{\otimes}c{\otimes}C$
  from $\Sigma \vdash B \Rightarrow c{\otimes}C$. Therefore, using (Cut)
  on $\Sigma \vdash A \Rightarrow b{\otimes}B$ and 
  $\Sigma \vdash b{\otimes}B \Rightarrow b{\otimes}c{\otimes}C$, we get
  $\Sigma \vdash A \Rightarrow b{\otimes}c{\otimes}C$. As a consequence,
  \begin{align}
    |A \Rightarrow B|_\Sigma \otimes |B \Rightarrow C|_\Sigma &=
    \textstyle\bigvee\{b \otimes c;\,
    \Sigma \vdash A \Rightarrow b{\otimes}B
    \text{ and }
    \Sigma \vdash B \Rightarrow c{\otimes}C\}
    \\
    &\leq
    \textstyle\bigvee\{b \otimes c;\,
    \Sigma \vdash A \Rightarrow b{\otimes}c{\otimes}C
    \}
    \\
    &= 
    \textstyle\bigvee\{d \in \RatL;\,
    \Sigma \vdash A \Rightarrow d{\otimes}C
    \}
    \\
    &= 
    |A \Rightarrow C|_\Sigma,
  \end{align}
  which proves~\eqref{eqn:o_trans}. 
\end{proof}

\begin{lemma}\label{le:compl}
  If $\Sigma \nvdash A \Rightarrow c{\otimes}B$ for $c \in \RatL$,
  then $||A \Rightarrow B||_{\Sigma} \leq c$.
\end{lemma}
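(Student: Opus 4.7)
The plan is to exhibit a single model $e$ of $\Sigma$ satisfying $||A \Rightarrow B||_e \leq c$; since $||A \Rightarrow B||_\Sigma$ is an infimum over all models, the conclusion then follows at once. I would take as $e$ the canonical \emph{syntactic closure} of $A$, defined pointwise by
\begin{align}
e(p) = |A \Rightarrow \{{}^{1}/p\}|_\Sigma
\end{align}
for every $p \in \mathrm{Var}$. Each $e(p)$ lies in $[0,1]$ (possibly irrationally), so $e$ is a legitimate evaluation. The whole argument then reduces to three facts about $e$, each extracted from one of the preceding lemmas.

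First I would check $A \subseteq e$, so that $S(A,e)=1$: for any $p$, the axiom scheme yields $\Sigma \vdash A \Rightarrow \{{}^{A(p)}/p\}$ (since $A \cup \{{}^{A(p)}/p\} = A$), and because $\{{}^{A(p)}/p\} = A(p){\otimes}\{{}^{1}/p\}$, this places $A(p)$ into the supremum defining $e(p)$. Second, I would derive the identity
\begin{align}
|A \Rightarrow B|_\Sigma = S(B,e).
\end{align}
Writing $I = \{p \in \mathrm{Var};\, B(p) > 0\}$ (which is finite because $B$ is finite), we have $B = \bigcup\{\{{}^{B(p)}/p\};\, p \in I\}$; applying~\eqref{eqn:union_syn} rewrites $|A \Rightarrow B|_\Sigma$ as $\bigwedge\{|A \Rightarrow \{{}^{B(p)}/p\}|_\Sigma;\, p \in I\}$, and~\eqref{eqn:c-shift_syn} turns each term into $B(p) \rightarrow e(p)$; the terms outside $I$ equal $0 \rightarrow e(p) = 1$, so the infimum is exactly $S(B,e)$. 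Third, I would use this identity to prove $e$ is a model of $\Sigma$: for any $C \Rightarrow D \in \Sigma$ we have $|C \Rightarrow D|_\Sigma = 1$, so~\eqref{eqn:o_trans} gives $|A \Rightarrow C|_\Sigma \leq |A \Rightarrow D|_\Sigma$, which by the preceding identity reads $S(C,e) \leq S(D,e)$, i.e.\ $||C \Rightarrow D||_e = 1$.

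To conclude, I would argue that $\Sigma \nvdash A \Rightarrow c{\otimes}B$ already forces $|A \Rightarrow B|_\Sigma \leq c$: any rational $d \geq c$ with $\Sigma \vdash A \Rightarrow d{\otimes}B$ would give $c{\otimes}B \subseteq d{\otimes}B$ and then $\Sigma \vdash A \Rightarrow c{\otimes}B$ by projectivity, contradicting the hypothesis, so every $d$ in the supremum defining $|A \Rightarrow B|_\Sigma$ is strictly below $c$. Combining this with $||A \Rightarrow B||_e = S(A,e) \rightarrow S(B,e) = S(B,e) = |A \Rightarrow B|_\Sigma$ gives $||A \Rightarrow B||_e \leq c$, and since $e$ is a model of $\Sigma$ we obtain $||A \Rightarrow B||_\Sigma \leq ||A \Rightarrow B||_e \leq c$. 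The main obstacle I anticipate is the modelhood step: the definition $e(p) = |A \Rightarrow \{{}^{1}/p\}|_\Sigma$ does not manifestly respect the formulas of $\Sigma$, and it is only through the identity $|A \Rightarrow B|_\Sigma = S(B,e)$ together with the transitivity inequality~\eqref{eqn:o_trans} that syntactic provability from $\Sigma$ translates into the pointwise subsethood needed in $e$.
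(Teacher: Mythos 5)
Your proposal is correct and follows essentially the same route as the paper: the same canonical evaluation $e(p) = |A \Rightarrow \{{}^{1\!}/p\}|_\Sigma$, the same identity $S(C,e) = |A \Rightarrow C|_\Sigma$ obtained from~\eqref{eqn:c-shift_syn} and~\eqref{eqn:union_syn}, modelhood via~\eqref{eqn:o_trans}, and the projectivity argument forcing $|A \Rightarrow B|_\Sigma \leq c$. The only cosmetic difference is that you establish $S(A,e)=1$ directly from the axiom scheme rather than via $|A \Rightarrow A|_\Sigma = 1$, which amounts to the same thing.
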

\begin{proof}
  We assume $\Sigma \nvdash A \Rightarrow c{\otimes}B$ and we find
  an evaluation $e$ which is a model of $\Sigma$ and satisfies 
  $||A \Rightarrow B||_e \leq c$. For every $p \in \mathrm{Var}$, put
  \begin{align}
    e(p) &= |A \Rightarrow \{{}^{1\!}/p\}|_\Sigma.
    \label{eqn:e(p)}
  \end{align}
  Next, we prove an auxiliary claim: For each finite rational $\mathbf{L}$-set
  $C$ in $\mathrm{Var}$, we have $S(C,e) = |A \Rightarrow C|_\Sigma$ where $e$
  is given by~\eqref{eqn:e(p)}. Using the fact that $C$ may be expressed by
  a union of finitely many rational singletons in $\mathrm{Var}$ and
  using~\eqref{eqn:c-shift_syn} together with \eqref{eqn:union_syn},
  it follows that
  \begin{align}
    S(C,e)
    &=
    \textstyle\bigwedge\{C(p) \rightarrow e(p);\, p \in \mathrm{Var}\} \\
    &=
    \textstyle\bigwedge\{C(p) \rightarrow
    |A \Rightarrow \{{}^{1\!}/p\}|_\Sigma;\, p \in \mathrm{Var}\} \\
    &=
    \textstyle\bigwedge\{|A \Rightarrow C(p){\otimes}\{{}^{1\!}/p\}|_\Sigma;\,
    p \in \mathrm{Var}\} \\
    &=
    \textstyle\bigwedge\{|A \Rightarrow \{{}^{C(p)\!}/p\}|_\Sigma;\,
    p \in \mathrm{Var}\} \\
    &=
    \textstyle\bigwedge\{|A \Rightarrow \{{}^{C(p)\!}/p\}|_\Sigma;\,
    p \in \mathrm{Var} \text{ and } C(p) > 0\} \\
    &=
    |A \Rightarrow \textstyle\bigcup\{\{{}^{C(p)\!}/p\};\,
    p \in \mathrm{Var} \text{ and } C(p) > 0\}|_\Sigma \\
    &=
    |A \Rightarrow C|_\Sigma.
  \end{align}
  Using the claim, we prove that $e$ is a model of $\Sigma$.
  Indeed, take any $E \Rightarrow F \in \Sigma$ and observe
  that~\eqref{eqn:o_trans} yields
  \begin{align}
    S(E,e)
    &=
    |A \Rightarrow E|_\Sigma \\
    &=
    |A \Rightarrow E|_\Sigma \otimes 1 \\
    &=
    |A \Rightarrow E|_\Sigma \otimes |E \Rightarrow F|_\Sigma \\
    &\leq 
    |A \Rightarrow F|_\Sigma \\
    &=
    S(F,e),
  \end{align}
  showing $||E \Rightarrow F||_e = 1$ owing to~\eqref{eqn:abbrev_sem}.
  Furthermore, we have
  \begin{align}
    S(A,e) &= |A \Rightarrow A|_\Sigma = 1,
    \label{eqn:SAe=1}
  \end{align}
  because $A \Rightarrow A$ is an axiom. Furthermore, if $d > c$
  for $d \in \RatL$, then our assumption $\Sigma \nvdash A \Rightarrow c{\otimes}B$
  yields $\Sigma \nvdash A \Rightarrow d{\otimes}B$ because otherwise
  the assumption $\Sigma \nvdash A \Rightarrow c{\otimes}B$ would be violated
  on account of projectivity. Therefore,
  \begin{align}
    S(B,e) = |A \Rightarrow B|_\Sigma \leq c
    \label{eqn:Seb_leq_c}
  \end{align}
  As a consequence of~\eqref{eqn:SAe=1} and~\eqref{eqn:Seb_leq_c}, we get
  \begin{align}
    ||A \Rightarrow B||_\Sigma
    &\leq
    ||A \Rightarrow B||_e \\
    &=
    S(A,e) \rightarrow S(B,e) \\
    &\leq 
    1 \rightarrow c \\
    &= c,
  \end{align}
  which is the desired inequality.
\end{proof}

Theorem~\ref{th:completeness} can be now proved as follows:

\begin{proof}[Proof of Theorem~\ref{th:completeness}]
  The inequality $|A \Rightarrow B|_\Sigma \leq ||A \Rightarrow B||_\Sigma$,
  i.e., Pavelka-style soundness, follows by standard arguments. In order to prove
  the completeness in Pavelka style, we prove that for each $c \in \RatL$
  such that $c < ||A \Rightarrow B||_\Sigma$,
  we have $\Sigma \vdash A \Rightarrow c{\otimes}B$ which immediately gives
  $||A \Rightarrow B||_\Sigma \leq |A \Rightarrow B|_\Sigma$. But this is
  a direct consequence of Lemma~\ref{le:compl}.
\end{proof}

Since the degrees of semantic entailment of formulas of the form~\eqref{eqn:fml}
in RPL and R$\Pi$L are defined as in our logic, we immediately get the following
consequence on the relationship of to these two Pavelka-style complete logics:

\begin{corollary}
  The following are consequences of Theorem~\ref{th:completeness}:
  \begin{itemize}\parskip=0pt%
    \item
      If\/ $\mathbf{L}$ is $\Luk$, then RFAL is a fragment of RPL.
    \item
      If\/ $\mathbf{L}$ is $\Gog$, then RFAL is a fragment of $\mathit{R\Pi L}$.
      \qed
    \end{itemize}
\end{corollary}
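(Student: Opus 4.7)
The plan is to combine Theorem~\ref{th:completeness} with H\'ajek's Pavelka completeness of RPL (respectively, the Pavelka completeness of R$\Pi$L due to Esteva, Godo, H\'ajek, and Navara) so as to identify the semantic and syntactic entailment degrees of RFAL with those of the containing logic on formulas of the form~\eqref{eqn:fml}.

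First, I would observe that every rational fuzzy attribute implication, as noted right after~\eqref{eqn:fml}, is already a well-formed formula in MTL extended by constants for rational truth degrees; in particular it is a formula of RPL when $\mathbf{L}=\Luk$ and of R$\Pi$L when $\mathbf{L}=\Gog$. The truth function interpreting each connective ($\logand$, $\Rightarrow$, and the constants $\overline{a}$) in RFAL is by definition the corresponding operation of $\Luk$ (resp.\ $\Gog$), which is also the interpretation used by RPL (resp.\ R$\Pi$L). Hence for any evaluation $e$ and any formula $\varphi$ of the form~\eqref{eqn:fml}, the truth degree $||\varphi||_e$ computed in RFAL equals the one computed in RPL (resp.\ R$\Pi$L), so the classes of models of a theory $\Sigma \subseteq \{\varphi : \varphi \text{ of form~\eqref{eqn:fml}}\}$ coincide, and the semantic entailment degree $||\varphi||_\Sigma$ of RFAL agrees with the corresponding semantic entailment degree in RPL (resp.\ R$\Pi$L). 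The remark preceding the theorem already records this identification on the \emph{semantic} side.

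Finally, by Theorem~\ref{th:completeness} the RFAL provability degree $|\varphi|_\Sigma$ equals this common semantic entailment degree, and by the Pavelka completeness of RPL (resp.\ R$\Pi$L) the same is true of the provability degree computed in the larger logic. Consequently, on theories and formulas of the form~\eqref{eqn:fml}, the models, the semantic entailment degrees, and the provability degrees of RFAL all agree with those of RPL (resp.\ R$\Pi$L), which is precisely what is meant by calling RFAL a fragment of RPL (resp.\ R$\Pi$L). There is no substantive obstacle here: the corollary is bookkeeping once both Pavelka completeness theorems are in hand, and the only subtlety is to make the word \emph{fragment} precise so that it covers both the syntactic inclusion of formulas and the preservation of the graded notions of entailment and provability.
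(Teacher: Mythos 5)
Your proposal is correct and matches the paper's own reasoning: the paper likewise derives the corollary immediately from the coincidence of the semantic entailment degrees of formulas of the form~\eqref{eqn:fml} in RFAL and in RPL (resp.\ R$\Pi$L), combined with Theorem~\ref{th:completeness} and the Pavelka completeness of the containing logics. No gap; your extra care in spelling out what \emph{fragment} means is just a more explicit version of the same argument.
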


\begin{remark}
  (a)
  Recall that R$\Pi$L utilizes an infinitary deduction rule~\cite{EsGoHaNa:Rflin}
  to ensure completeness in Pavelka style. In contrast, our logic for 
  $\mathbf{L}$ being $\Gog$, has the usual finitary notion of a proof.
  In other words, we have shown that R$\Pi$L restricted just to formulas
  of the form~\eqref{eqn:fml} can be axiomatized without infinitary deduction
  rules.

  (b)
  RFAL is not Pavelka-style complete with $\mathbf{L}$ being the standard G\"odel
  algebra (i.e., $\mathbf{L}$ defined on the real unit interval with
  $\otimes = \wedge$). Observe that for
  \begin{align}
    \Sigma &=
    \bigl\{\{{}^{0\!}/p\} {\,\Rightarrow\,} \{{}^{a\!}/p\};\,
    a \in [0,0.5)_\mathbb{Q}\bigr\} \cup
    \bigl\{\{{}^{0.5\!}/p\} {\,\Rightarrow\,} \{{}^{1\!}/q\}\bigr\},
  \end{align}
  we obviously have $||\{{}^{0\!}/p\} {\,\Rightarrow\,} \{{}^{1\!}/q\}||_\Sigma = 1$
  for $\mathbf{L}$ being the standard G\"odel algebra because for each model
  $e$ of $\Sigma$, we have $e(p) \geq 0.5$ and thus $e(q) = 1$. On the other hand,
  we have $|\{{}^{0\!}/p\} {\,\Rightarrow\,} \{{}^{1\!}/q\}|_\Sigma < 1$.
  Indeed, each finite $\Sigma' \subseteq \Sigma$ admits a rational model $e$
  such that $e(p) = a < 0.5$ and $e(q) = a$. Owing to soundness,
  we get $|\{{}^{0\!}/p\} {\,\Rightarrow\,} \{{}^{1\!}/q\}|_{\Sigma'} \leq a < 0.5$
  and thus $|\{{}^{0\!}/p\} {\,\Rightarrow\,} \{{}^{1\!}/q\}|_{\Sigma} \leq 0.5 < 1$.
  Therefore, in order to ensure completeness in Pavelka style for $\mathbf{L}$
  being the standard G\"odel algebra, one has to resort to infinitary
  deduction rules as in~\cite{BeVy:Falcrl,KuVy:Flprrai}.

  (c)
  Note for readers familiar with Pavelka's abstract logic as it is presented
  in~\cite[Section 9.2]{Haj:MFL}: Our approach uses the traditional understanding
  of theories as sets of formulas. Following Pavelka's ideas, one may extend it
  to use theories considered as $\mathbf{L}$-sets of formulas. This approach is,
  however, reducible to the traditional one: For $\Sigma$ considered as
  an $\mathbf{L}$-set which for any $A \Rightarrow B$ prescribes a general
  degree $\Sigma(A \Rightarrow B) \in [0,1]$ (including irrational degrees),
  we may consider
  \begin{align*}
    \Sigma^* &= \{A \Rightarrow c{\otimes}B;\,
    c \in \RatL \text{ and } c < \Sigma(A \Rightarrow B)\}
  \end{align*}
  which does the same job, see~\cite[Section 9.2]{Haj:MFL} for further details
  on semantic and syntactic entailment from ``fuzzy theories.''
\end{remark}

We now turn our attention to the proof of Theorem~\ref{th:decidable}.
We utilize a characterization of degrees of provability based on constructing
least models containing given evaluations.
For each evaluation $e$ and a theory $\Sigma$, we consider evaluations
$e^n_\Sigma$ ($n$ is a finite ordinal) and $e^\omega_\Sigma$ ($\omega$ denotes
the least infinite ordinal) as follows:
\begin{align}
  e^0_\Sigma &= e,
  \label{eqn:e_0}
  \\
  e^{n+1}_\Sigma &= e^n_\Sigma \cup
  \textstyle\bigcup\{S(A,e^n_\Sigma) {\otimes} B;\; A \Rightarrow B \in \Sigma\},
  \label{eqn:e_n}
  \\
  e^\omega_\Sigma &= \textstyle\bigcup\{e^n_\Sigma;\, n < \omega\},
  \label{eqn:e_omega}
\end{align}
for all $n < \omega$. Observe that if $\Sigma$ is finite and the evaluation
$e$ is finite and rational (recall that $e$ is in fact an $\mathbf{L}$-set),
then $e^n_\Sigma$ is finite and rational for each $n < \omega$ provided that
$\mathbf{L}$ satisfies Assumption~\ref{ass:rat}. The following assertions
show properties of $e^\omega_\Sigma$ given by~\eqref{eqn:e_omega}.

\begin{lemma}\label{le:least_model}
  Let\/ $\mathbf{L}$ satisfy~\eqref{eqn:VEEabi=aVEEbi}, $\Sigma$ be a theory
  and $e$ be an evaluation. Then, $e^\omega_\Sigma$ given
  by~\eqref{eqn:e_omega} is the least model of $\Sigma$ which contains $e$.
\end{lemma}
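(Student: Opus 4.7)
The plan is to verify three conditions in turn: (i) $e \subseteq e^\omega_\Sigma$, (ii) $e^\omega_\Sigma$ is a model of $\Sigma$, and (iii) any model of $\Sigma$ that contains $e$ also contains $e^\omega_\Sigma$. Claim (i) is immediate from $e = e^0_\Sigma$ and \eqref{eqn:e_omega}. For (iii), I would fix an arbitrary model $f \supseteq e$ of $\Sigma$ and prove by induction on $n < \omega$ that $e^n_\Sigma \subseteq f$. The base case is (i) applied to $f$. For the inductive step, suppose $e^n_\Sigma \subseteq f$. Monotonicity of $S$ in its second argument gives $S(A, e^n_\Sigma) \leq S(A, f)$ for each $A \Rightarrow B \in \Sigma$. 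Since $f$ is a model, $S(A, f) \leq S(B, f)$, which by adjointness means $S(A, f) \otimes B \subseteq f$, and hence $S(A, e^n_\Sigma) \otimes B \subseteq f$. Taking the union over all axioms of $\Sigma$ together with $e^n_\Sigma \subseteq f$ shows $e^{n+1}_\Sigma \subseteq f$, and then $e^\omega_\Sigma \subseteq f$ follows from \eqref{eqn:e_omega}.

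The real work is in (ii). Fix $A \Rightarrow B \in \Sigma$; I need $S(A, e^\omega_\Sigma) \leq S(B, e^\omega_\Sigma)$. First, from \eqref{eqn:e_n} we have $S(A, e^n_\Sigma) \otimes B \subseteq e^{n+1}_\Sigma \subseteq e^\omega_\Sigma$ for every $n$, so pointwise $S(A, e^n_\Sigma) \otimes B(p) \leq e^\omega_\Sigma(p)$ for all $p$. By adjointness this gives $S(A, e^n_\Sigma) \leq B(p) \rightarrow e^\omega_\Sigma(p)$ for all $p$, hence $S(A, e^n_\Sigma) \leq S(B, e^\omega_\Sigma)$ for each $n$, and therefore $\textstyle\bigvee_n S(A, e^n_\Sigma) \leq S(B, e^\omega_\Sigma)$.

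It remains to recognize the left-hand side as $S(A, e^\omega_\Sigma)$. Here I would use that $A$ is finite, so the infimum in the definition of $S$ effectively ranges over the finite support of $A$, together with condition~\eqref{eqn:VEEabi=aVEEbi} of the ambient lattice, which yields
\begin{align}
S(A, e^\omega_\Sigma)
&= \textstyle\bigwedge_{p \in \mathrm{Var}} \bigl(A(p) \rightarrow \bigvee_n e^n_\Sigma(p)\bigr)
= \textstyle\bigwedge_{p \in \mathrm{Var}} \bigvee_n \bigl(A(p) \rightarrow e^n_\Sigma(p)\bigr).
\end{align}
The chain $\{e^n_\Sigma\}_{n<\omega}$ is increasing (by construction each $e^{n+1}_\Sigma$ contains $e^n_\Sigma$), so each family $\{A(p) \rightarrow e^n_\Sigma(p)\}_{n<\omega}$ is a monotone-increasing chain in the linearly ordered complete lattice $[0,1]$. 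A finite meet of directed suprema in such a lattice can be swapped (this is a routine $\epsilon$-argument exploiting linearity and finiteness of the support of $A$), giving $S(A, e^\omega_\Sigma) = \textstyle\bigvee_n S(A, e^n_\Sigma)$, which combined with the previous inequality closes the argument.

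The main obstacle is precisely this swap of the finite meet and the directed supremum: it is where the assumptions that $A$ is finite (so the meet is effectively finite), that $\mathbf{L}$ is linear and complete (so the $\epsilon$-chase works), and that $\rightarrow$ satisfies \eqref{eqn:VEEabi=aVEEbi} (so the supremum can be moved inside the residuum in the first place) are all used together. Everything else reduces to pointwise applications of adjointness and monotonicity of the operations on $\mathbf{L}$-sets introduced in Section~\ref{sec:prelim}.
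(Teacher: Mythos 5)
Your proposal is correct and takes essentially the same approach as the paper: the same decomposition (containment, model property, minimality by induction), with the crux in both cases being the identity $S(A,e^\omega_\Sigma)=\bigvee\{S(A,e^n_\Sigma);\,n<\omega\}$, obtained from \eqref{eqn:VEEabi=aVEEbi}, the finiteness of $A$, and the increasing chain $e^0_\Sigma\subseteq e^1_\Sigma\subseteq\cdots$. The only cosmetic difference is that the paper performs the meet--supremum exchange via complete distributivity of $[0,1]$ and choice functions $f\!:\mathrm{Var}\to\{n;\,n<\omega\}$ before using finiteness of $A$ to pick a single index $n$, whereas you swap a finite meet with the directed supremum directly; both arguments are sound and use the same hypotheses.
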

\begin{proof}
  We first show that $e^\omega_\Sigma$ is a model of $\Sigma$. Directly
  by~\eqref{eqn:e_n}, for each formula $A \Rightarrow B \in \Sigma$ and
  each $n < \omega$, we have
  \begin{align}
    S(A,e^n_\Sigma) {\otimes} B \subseteq e^{n+1}_\Sigma.
  \end{align}
  Therefore,
  \begin{align}
    \textstyle\bigcup\{S(A,e^n_\Sigma) {\otimes} B;\, n < \omega\}
    \subseteq
    \textstyle\bigcup\{e^{n+1}_\Sigma;\, n < \omega\} =
    e^\omega_\Sigma.
    \label{eqn:least_incl}
  \end{align}
  In addition, using \eqref{eqn:S}, \eqref{eqn:e_omega},
  and \eqref{eqn:VEEabi=aVEEbi}, it follows that
  \begin{align}
    S(A,e^\omega_\Sigma) \otimes B &=
    \textstyle\bigwedge\{A(p) \rightarrow e^\omega_\Sigma(p);\,
    p \in \mathrm{Var}\} \otimes B
    \\
    &=
    \textstyle\bigwedge\{A(p) \rightarrow \bigvee\{e^n_\Sigma(p);\,
    n < \omega\};\, p \in \mathrm{Var}\} \otimes B
    \\
    &=
    \textstyle\bigwedge\{\bigvee\{A(p) \rightarrow e^n_\Sigma(p);\,
    n < \omega\};\, p \in \mathrm{Var}\} \otimes B.
    \label{eqn:wedge_vee}
  \end{align}
  Let $\mathcal{F}$ denote the set of all maps
  of the form $f\!: \mathrm{Var} \to \{n;\, n < \omega\}$.
  Using complete distributivity,
  the previous equality can be extended as 
  \begin{align}
    S(A,e^\omega_\Sigma) \otimes B
    &=
    \textstyle\bigvee\{\bigwedge\{A(p) \rightarrow e^{f(p)}_\Sigma(p);\,
    p \in \mathrm{Var}\};\, f \in \mathcal{F}\} \otimes B.
    \label{eqn:vee_wedge_f}
  \end{align}
  In addition, $A$ is a finite $\mathbf{L}$-set. Therefore,
  there are only finitely many $p \in \mathrm{Var}$ such that
  $A(p) \rightarrow e^{f(p)}_\Sigma(p) < 1$. As a consequence,
  for any $f \in \mathcal{F}$, there is $n < \omega$ such that
  \begin{align}
    \textstyle\bigwedge\{A(p) \rightarrow e^{f(p)}_\Sigma(p);\,
    p \in \mathrm{Var}\} \leq
    \textstyle\bigwedge\{A(p) \rightarrow e^n_\Sigma(p);\,
    p \in \mathrm{Var}\}.
  \end{align}
  Hence, \eqref{eqn:vee_wedge_f} is equivalent to
  \begin{align}
    S(A,e^\omega_\Sigma) \otimes B
    &=
    \textstyle\bigvee\{\bigwedge\{A(p) \rightarrow e^n_\Sigma(p);\,
    p \in \mathrm{Var}\};\, n < \omega\} \otimes B
    \\
    &=
    \textstyle\bigvee\{S(A,e^n_\Sigma);\, n < \omega\} \otimes B
    \\
    &=
    \textstyle\bigcup\{S(A,e^n_\Sigma) {\otimes} B;\, n < \omega\}
    \\
    &\subseteq
    e^\omega_\Sigma
  \end{align}
  using~\eqref{eqn:least_incl}. Hence, we have shown
  $S(A,e^\omega_\Sigma) {\otimes} B \subseteq e^\omega_\Sigma$
  for any $A \Rightarrow B \in \Sigma$, i.e.,
  \begin{align}
    S(A,e^\omega_\Sigma) \leq S(B,e^\omega_\Sigma)
  \end{align}
  for any $A \Rightarrow B \in \Sigma$ and so
  $e^\omega_\Sigma$ is a model of $\Sigma$.

  Now, consider a model $e'$ of $\Sigma$ such that $e \subseteq e'$, i.e.,
  $e(p) \leq e'(p)$ for all $p \in \mathrm{Var}$. Then, by induction,
  $e^{n}_\Sigma \subseteq e'$ gives
  \begin{align}
    S(A,e^{n}_\Sigma) \otimes B \subseteq S(A,e') \otimes B \subseteq e'
  \end{align}
  for all $A \Rightarrow B \in \Sigma$, i.e., \eqref{eqn:e_n} immediately
  yields $e^{n+1}_\Sigma \subseteq e'$. As a consequence,
  $e^\omega_\Sigma \subseteq e'$, showing that
  $e^\omega_\Sigma$ is indeed the least model of $\Sigma$ containing $e$.
\end{proof}

A description of provability degrees based on least models is now established.
Indeed, using the existing result described in \cite[Theorem~3.11]{BeVy:ADfDwG},
it follows that $||A \Rightarrow B||_\Sigma = S(B,e)$ for $e$ being
the least model of $\Sigma$ containing~$A$.
Applying Theorem~\ref{th:completeness} and Lemma~\ref{le:least_model}, we
get the following characterization.

\begin{corollary}\label{th:least_compl}
  Let\/ $\mathbf{L}$ satisfy Assumption~\ref{ass:rat}
  and condition~\eqref{eqn:VEEabi=aVEEbi}.
  Then, for any $\Sigma$ and $A \Rightarrow B$,
  \begin{align}
    |A \Rightarrow B|_\Sigma &= S(B, A^\omega_\Sigma).
  \end{align}
\end{corollary}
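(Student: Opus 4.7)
The plan is to chain together three ingredients already available: Pavelka completeness (Theorem~\ref{th:completeness}), the semantic least-model characterization from \cite[Theorem~3.11]{BeVy:ADfDwG} mentioned in the paragraph preceding the corollary, and the explicit construction of the least model provided by Lemma~\ref{le:least_model}.

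First I would apply Theorem~\ref{th:completeness} to rewrite $|A \Rightarrow B|_\Sigma$ as $||A \Rightarrow B||_\Sigma$, reducing the claim to a purely semantic statement. Next I invoke the cited semantic characterization, which says that $||A \Rightarrow B||_\Sigma = S(B, e_*)$ where $e_*$ denotes the least model of $\Sigma$ that contains $A$ (with the finite rational $\mathbf{L}$-set $A$ viewed as an evaluation in $L^{\mathrm{Var}}$). Finally I identify this least model: taking $e := A$ in the construction~\eqref{eqn:e_0}--\eqref{eqn:e_omega}, Lemma~\ref{le:least_model} guarantees that $A^\omega_\Sigma$ is precisely the least model of $\Sigma$ containing $A$, so $e_* = A^\omega_\Sigma$ and the chain $|A \Rightarrow B|_\Sigma = ||A \Rightarrow B||_\Sigma = S(B,e_*) = S(B, A^\omega_\Sigma)$ closes.

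The only step that is not entirely internal to this excerpt is the semantic characterization, which is the main obstacle if one wants a self-contained argument. The easy direction $||A \Rightarrow B||_\Sigma \leq S(B, A^\omega_\Sigma)$ is immediate because $A^\omega_\Sigma$ is itself a model of $\Sigma$ (Lemma~\ref{le:least_model}) with $A \subseteq A^\omega_\Sigma$, so $S(A, A^\omega_\Sigma) = 1$ and $||A \Rightarrow B||_{A^\omega_\Sigma} = S(B, A^\omega_\Sigma)$. For the converse, I would take an arbitrary model $e$ of $\Sigma$, put $c := S(A,e)$, and consider the rescaled evaluation $e'(p) := c \rightarrow e(p)$. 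Using the adjunction and commutativity of $\otimes$, one verifies that $S(X, e') = c \rightarrow S(X, e)$ for every finite $\mathbf{L}$-set $X$; this yields $S(A, e') = 1$ (so $A \subseteq e'$) and preserves the inequalities $S(C,e') \leq S(D,e')$ for every $C \Rightarrow D \in \Sigma$, making $e'$ a model of $\Sigma$ that contains $A$. The least-model property then gives $A^\omega_\Sigma \subseteq e'$, whence $S(B, A^\omega_\Sigma) \leq S(B, e') = c \rightarrow S(B,e) = ||A \Rightarrow B||_e$, and taking the infimum over models $e$ delivers the remaining inequality.
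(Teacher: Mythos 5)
Your proof is correct and takes essentially the same route as the paper, whose proof is exactly the chain you describe: Theorem~\ref{th:completeness} to replace $|A \Rightarrow B|_\Sigma$ by $||A \Rightarrow B||_\Sigma$, the cited semantic characterization \cite[Theorem~3.11]{BeVy:ADfDwG}, and Lemma~\ref{le:least_model} identifying $A^\omega_\Sigma$ as the least model of $\Sigma$ containing $A$. Your supplementary self-contained argument for the cited characterization (via the rescaled evaluation $e'(p) = S(A,e) \rightarrow e(p)$, using $a \rightarrow (b \rightarrow c) = b \rightarrow (a \rightarrow c)$ and distributivity of $c \rightarrow {}$ over infima) is also sound, though the paper itself simply cites that result.
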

\begin{proof}
  Consequence of Theorem~\ref{th:completeness}, Lemma~\ref{le:least_model},
  and \cite[Theorem~3.11]{BeVy:ADfDwG}.
\end{proof}

In case of $\Luk$ and $\Gog$, the observation in Lemma~\ref{le:least_model}
and its consequence in Corollary~\ref{th:least_compl} may further be improved
provided that $\Sigma$ is finite in which case one may always take some
$e^n_\Sigma$ instead of $e^\omega_\Sigma$:

\begin{lemma}\label{le:fin}
  Let $\mathbf{L}$ be either of $\Luk$ or $\Gog$. Then,
  for any finite theory $\Sigma$ and a rational evaluation $e$ there
  is $n$ such that $e^\omega_\Sigma = e^n_\Sigma$.
\end{lemma}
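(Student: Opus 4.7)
The plan is to show, for each $p$ in the finite set $V:=\mathrm{supp}(e)\cup\bigcup_{A\Rightarrow B\in\Sigma}\mathrm{supp}(B)$, that the non-decreasing sequence of rationals $\bigl(e^n_\Sigma(p)\bigr)_{n<\omega}$ takes only finitely many distinct values. Since $\mathrm{supp}(e^n_\Sigma)\subseteq V$ for every $n$, the whole chain $(e^n_\Sigma)_n$ then lies in a finite subset of $[0,1]^V$, and any monotone sequence in a finite poset is eventually constant.

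For $\mathbf{L}=\Luk$ I would use a common-denominator bound. Let $N$ be the least common multiple of the denominators of the (finitely many) rationals appearing in $e$ and in the formulas of $\Sigma$. The operations $\otimes_\Luk(a,b)=\max(0,a+b-1)$, $\rightarrow_\Luk(a,b)=\min(1,1-a+b)$, $\wedge$, and $\vee$ all preserve the class of rationals whose denominator divides $N$, so a straightforward induction on $n$ using \eqref{eqn:S} and \eqref{eqn:e_n} shows that every value of every $e^n_\Sigma$ lies in the finite set $\{k/N:0\le k\le N\}$, which is enough.

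For $\mathbf{L}=\Gog$ multiplication does not preserve any common denominator, which is the main obstacle. I would pass to the additive picture via the order-reversing isomorphism $\phi=-\log$ (with $\phi(0)=\infty$), under which $\Gog$ becomes the truncated-difference monoid on $[0,\infty]$. The values of the (now non-increasing) translated chain $\tilde{e}^n_\Sigma:=\phi\circ e^n_\Sigma$ all lie in the free $\mathbb{Z}$-module $M:=\bigoplus_{p\in P}\mathbb{Z}\log p$, where $P$ is the finite set of primes dividing some numerator or denominator occurring in $e$ or $\Sigma$, because $+$, truncated subtraction, $\max$, and $\min$ all preserve $M\cup\{\infty\}$.

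This last observation is not yet enough, because $M\cap[0,\infty)$ is dense in $\mathbb{R}_{\ge 0}$ as soon as $|P|\ge 2$, so the descending chain condition is not automatic. To overcome this I would analyse the recurrence for $\tilde{e}^n_\Sigma$ produced by \eqref{eqn:e_n}: every strict decrease at a coordinate $p$ at step $n+1$ is caused by some rule $A\Rightarrow B\in\Sigma$ and has one of two shapes---either a \emph{saturated} firing that sets $\tilde{e}^{n+1}_\Sigma(p)=\tilde{B}(p)$, which for each fixed rule can strictly decrease $\tilde{e}_\Sigma(p)$ at most once, or a \emph{tight} firing $\tilde{e}^{n+1}_\Sigma(p)=\tilde{B}(p)+\tilde{e}^n_\Sigma(q^*)-\tilde{A}(q^*)$ driven by a witness $q^*\in\mathrm{supp}(A)$ whose value strictly decreased at a strictly earlier step. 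Unwinding the witness relation through the finite set $V$ eventually grounds out either in a saturation or in the initial values of $\tilde{e}$, and using that $\tilde{e}^n_\Sigma(p)\ge 0$ together with the finitely many possible ``edge shifts'' $\tilde{B}(p)-\tilde{A}(q)$ bounds the total number of strict decreases at each coordinate, yielding finite termination.
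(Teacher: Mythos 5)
Your $\Luk$ case is correct and is essentially the paper's own argument: taking a common denominator $N$ for all rationals occurring in $e$ and $\Sigma$ is the same as working inside a finite equidistant \L ukasiewicz subchain, which is closed under $\otimes_\Luk$, $\rightarrow_\Luk$, $\wedge$, $\vee$; so by induction every $e^n_\Sigma$ takes values in $\{k/N:0\le k\le N\}$ on the finitely many relevant variables, and the non-decreasing chain stabilizes. The reduction to the finite support set $V$ is also fine.

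The gap is in the $\Gog$ case, at the very step where the real work lies. The logarithmic translation, the saturated/tight dichotomy, and the observation that two tight decreases at $p$ via the same (rule, witness) pair force an intervening strict decrease at the witness are all sound. But your concluding claim --- that $\tilde{e}^n_\Sigma(p)\ge 0$ together with the finitely many edge shifts $\tilde{B}(p)-\tilde{A}(q)$ ``bounds the total number of strict decreases at each coordinate'' --- is asserted, not proved, and the stated ingredients do not yield it. First, edge shifts may be positive as well as negative, so finiteness of the shift alphabet puts no cap on how many decreases can occur. Second, the per-pair observation is circular as soon as several coordinates feed one another ($p$ decreases because $q$ did, $q$ because $p$ did, \dots), which is exactly the situation one must rule out. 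Third, even granting that every cyclic segment of a witness trace has strictly negative net shift, sums of shifts drawn from a finite set generate a dense subgroup of $\mathbb{R}$, so these negative amounts can accumulate at $0$; a priori the strictly decreasing values at a coordinate could converge from above without ever violating $\tilde{e}\ge 0$. What is missing is the quantitative step that the paper supplies: assuming some coordinate changes infinitely often, a pigeonhole over the finitely many variables and formulas extracts one \emph{fixed} cycle $p_0,\dots,p_{m-1},p_m=p_0$ of tight firings, as in \eqref{eqn:inf1}--\eqref{eqn:inf2}, which recurs infinitely often; its net effect is multiplication of $e_\Sigma(p_0)$ by the fixed constant $c=\bigl(B_0(p_1)\cdots B_{m-1}(p_m)\bigr)/\bigl(A_0(p_0)\cdots A_{m-1}(p_{m-1})\bigr)>1$, i.e.\ a fixed strictly negative additive shift in your coordinates, and iterating it drives the value beyond every bound, contradicting $e^n_\Sigma(p_0)\le 1$ (equivalently $\tilde{e}^n_\Sigma(p_0)\ge 0$). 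Until you exhibit such a fixed recurring cycle with a fixed negative net shift (or some other uniform lower bound on the decrements), the termination claim for $\Gog$ remains unproven.
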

\begin{proof}
  We inspect the situation for $\Luk$ and $\Gog$ separately. In the proof, we
  denote by $O$ the set of all propositional variables which appear in formulas
  in $\Sigma$. That is, $p \in O$ whenever there is 
  $A \Rightarrow B \in \Sigma$ such that $A(p) > 0$ or $B(p) > 0$.
  Analogously, by $K$ we denote the set of non-zero degrees (represented by
  constants) which appear in formulas in $\Sigma$. That is, $c \in K$
  whenever there is $A \Rightarrow B \in \Sigma$ such that
  $A(p) = c > 0$ or $B(p) = c > 0$ for some $p \in \mathrm{Var}$.

  Note that since $\Sigma$ is finite, then $O$ is finite as well. As a consequence,
  there are at most finitely many $p \in \mathrm{Var}$ such that
  $e^\omega_\Sigma(p) > e(p)$. Namely, $e^\omega_\Sigma(p) > e(p)$ implies
  that $p \in O$ which follows directly by~\eqref{eqn:e_n}. Therefore, in the
  proof, we tacitly assume that $e(p) = 0$ whenever $p \not\in O$.

  Let $\mathbf{L}$ be $\Luk$ and observe that since $K$ is finite and all degrees
  in $K$ are rational, then one may consider a finite equidistant \L ukasiewicz
  subchain of $\Luk$ consisting of rational truth degrees which contains all
  the degrees from $K$ and $\{e(p);\, p \in \mathrm{Var}\}$. Denote the subchain
  by $\mathbf{L}'$ and its support set by $L'$.
  Then, inspecting \eqref{eqn:e_0} and~\eqref{eqn:e_n}, it follows
  that each $e^n_\Sigma$ is rational and
  \begin{align}
    \{e^n_\Sigma(p);\, p \in \mathrm{Var}\} \subseteq L'
  \end{align}
  because each $e^n_\Sigma(p)$ may be expressed by finitely many applications
  of operations in $\mathbf{L}'$.
  As a consequence, $e^\omega_\Sigma = e^n_\Sigma$
  for some $n < \omega$ because only finitely may propositional variables
  (those in $O$) are assigned finitely many different degrees (those in $L'$), i.e.,
  the chain $e^0_\Sigma \subseteq e^1_\Sigma \subseteq e^2_\Sigma \subseteq \cdots$
  consists of only finitely many proper inclusions.

  For $\mathbf{L}$ being $\Gog$, we proceed by contradiction: We assume
  there are $p \in \mathrm{Var}$ such that $e^n_\Sigma(p) < e^\omega_\Sigma(p)$
  for all $n < \omega$. Since $\Sigma$ and $O$ are finite, there are
  pairwise distinct $p_0,\ldots,p_{m-1} \in \mathrm{Var}$ and formulas
  $A_0 \Rightarrow B_0,\ldots,A_{m-1} \Rightarrow B_{m-1}$ in $\Sigma$
  such that for $p_m = p_0$ and for infinitely many $n < \omega$, we have
  \begin{align}
    e^{n+1}_\Sigma(p_1) &=
    (A_0(p_0) \rightarrow e^{n}_\Sigma(p_0)) \otimes B_0(p_1),
    \label{eqn:inf1}
    \\
    e^{n+2}_\Sigma(p_2) &=
    (A_1(p_1) \rightarrow e^{n+1}_\Sigma(p_1)) \otimes B_1(p_2),
    \\
    \vdots\phantom{(p_2)} &\kern3em\vdots \notag
    \\
    e^{n+m}_\Sigma(p_m) &=
    (A_{m-1}(p_{m-1}) \rightarrow e^{n+m-1}_\Sigma(p_{m-1})) \otimes B_{m-1}(p_m),
    \label{eqn:inf2}
  \end{align}
  and the following conditions are satisfied:
  \begin{itemize}
  \item[--]
    $e^{n+i}_\Sigma(p_i) > e^{n+i-1}_\Sigma(p_i)$ for all $i=1,\ldots,m$, and
  \item[--]
    $A_i(p_i) > e^{n+i}_\Sigma(p_i)$ for all $i=0,\ldots,m-1$.
  \end{itemize}
  As a consequence, $e^{n+m}_\Sigma(p_0) = e^{n+m}_\Sigma(p_m) > e^{n}_\Sigma(p_0)$.
  Furthermore, directly by the definition of $\rightarrow$ in $\Gog$, we get that
  \begin{align}
    e^{n+i}_\Sigma(p_i) &=
    \frac{e^{n+i-1}_\Sigma(p_{i-1}) \cdot B_{i-1}(p_i)}{A_{i-1}(p_{i-1})}
  \end{align}
  for all $i=1,\ldots,m$. Therefore, $e^{n+m}_\Sigma(p_0)$ may be expressed as
  \begin{align}
    e^{n+m}_\Sigma(p_0) = e^{n+m}_\Sigma(p_m) &=
    \frac{e^n_\Sigma(p_0) \cdot B_0(p_1) \cdot B_1(p_2) \cdots B_{m-1}(p_m)}{
      A_0(p_0) \cdot A_1(p_1) \cdots A_{m-1}(p_{m-1})}.
  \end{align}
  That is, we have $e^{n+m}_\Sigma(p_0) = e^n_\Sigma(p_0) \cdot c$ for $c > 1$.
  Since we have assumed that \eqref{eqn:inf1}--\eqref{eqn:inf2} hold for infinitely
  many $n$, then there are $n_1,n_2,\ldots$ such that
  \begin{align}
    e^{n_{i+1}}_\Sigma(p_0) \geq e^{n_i+m}_\Sigma(p_0) = e^{n_i}_\Sigma(p_0) \cdot c
  \end{align}
  for all $i=1,2,\ldots$ and thus
  \begin{align}
    e^{n_{i}}_\Sigma(p_0) \geq e^{n_1}_\Sigma(p_0) \cdot c^{i-1}
  \end{align}
  for all $i=1,2,\ldots$ which means that
  \begin{align}
    \lim_{i \to \infty}e^{n_{i}}_\Sigma(p_0) \geq
    \lim_{i \to \infty}e^{n_1}_\Sigma(p_0) \cdot c^{i-1} =
    e^{n_1}_\Sigma(p_0) \lim_{i \to \infty}c^{i-1} =
    \infty,
  \end{align}
  which contradicts the fact that $e^n_\Sigma(p_0) < e^{\omega}_\Sigma(p_0)$
  for all $n < \omega$.
\end{proof}

\begin{lemma}\label{le:prov_An}
  Let $\Sigma$ be finite. Then, $\Sigma \vdash A \Rightarrow A^n_\Sigma$
  for all $n < \omega$.
\end{lemma}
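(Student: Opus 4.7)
The plan is induction on $n$. The base case $n=0$ is immediate: $A^0_\Sigma = A$, and $A \Rightarrow A$ is an instance of the axiom scheme $E{\cup}F \Rightarrow F$ (take $E = F = A$).

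For the inductive step, assume $\Sigma \vdash A \Rightarrow A^n_\Sigma$. Since $\Sigma$ is finite, list its formulas as $C_1 \Rightarrow D_1, \ldots, C_k \Rightarrow D_k$ and set $c_j = S(C_j, A^n_\Sigma)$ for $j = 1, \ldots, k$. An easy parallel induction using Assumption~\ref{ass:rat} shows that $A^n_\Sigma$ is finite and rational, so each $c_j$ is a finite meet of rational values $C_j(p) \rightarrow A^n_\Sigma(p)$ (restricted to the finitely many $p$ with $C_j(p) > 0$) and is therefore itself rational; in particular $c_j \in \RatL$ is a legal multiplier for (Mul). The pivotal observation is that by the adjointness definition of $S$ we have $c_j \otimes C_j \subseteq A^n_\Sigma$, so $A^n_\Sigma \cup (c_j \otimes C_j) = A^n_\Sigma$ and therefore the formula $A^n_\Sigma \Rightarrow c_j \otimes C_j$ is itself an instance of the axiom scheme, not something we need to derive.

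The remaining steps are routine. For each $j$, apply (Cut) to $A \Rightarrow A^n_\Sigma$ (induction hypothesis) and $A^n_\Sigma \Rightarrow c_j \otimes C_j$ (axiom) to obtain $\Sigma \vdash A \Rightarrow c_j \otimes C_j$; apply (Mul) to $C_j \Rightarrow D_j \in \Sigma$ with multiplier $c_j$ to obtain $\Sigma \vdash c_j \otimes C_j \Rightarrow c_j \otimes D_j$; and a second (Cut) yields $\Sigma \vdash A \Rightarrow c_j \otimes D_j$. Finitely many applications of the derived additivity rule~\cite[Lemma 4.2]{BeVy:ADfDwG}---the one place where the finiteness of $\Sigma$ is essential---then combine the induction hypothesis together with the $k$ formulas $A \Rightarrow c_j \otimes D_j$ into $\Sigma \vdash A \Rightarrow A^n_\Sigma \cup \bigcup_{j=1}^{k}(c_j \otimes D_j)$, whose right-hand side matches $A^{n+1}_\Sigma$ by~\eqref{eqn:e_n}. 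There is no substantive obstacle; the only bit of insight required is to recognize $A^n_\Sigma \Rightarrow c_j \otimes C_j$ as an axiom via the adjointness characterization of $S$, after which (Cut), (Mul), and additivity close the argument.
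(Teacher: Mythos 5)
Your proof is correct and follows essentially the same route as the paper's: induction on $n$, applying (Mul) with the rational multiplier $c_j = S(C_j,A^n_\Sigma)$, chaining through $A^n_\Sigma$ via (Cut) using $c_j{\otimes}C_j \subseteq A^n_\Sigma$, and finishing with finitely many applications of additivity to reach $A^{n+1}_\Sigma$ as given by~\eqref{eqn:e_n}. The only difference is presentational: you make the chaining explicit by recognizing $A^n_\Sigma \Rightarrow c_j{\otimes}C_j$ as an axiom instance, which the paper leaves implicit in its terse use of (Cut).
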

\begin{proof}
  The proof goes by induction. Notice that since $\Sigma$ is finite,
  then $A^n_\Sigma$ is rational and finite for any $n < \omega$, i.e.,
  $A \Rightarrow A^n_\Sigma$ is well defined formula.
  Suppose that $\Sigma \vdash A \Rightarrow A^n_\Sigma$
  and take $E \Rightarrow F \in \Sigma$. Then, using (Mul) for $c = S(E,A^n_\Sigma)$,
  we have
  $\Sigma \vdash S(E,A^n_\Sigma){\otimes}E \Rightarrow
  S(E,A^n_\Sigma){\otimes}F$.
  Then, using $S(E,A^n_\Sigma){\otimes}E \subseteq A^n_\Sigma$ and (Cut),
  we get $\Sigma \vdash A \Rightarrow S(E,A^n_\Sigma){\otimes}F$.
  Inspecting~\eqref{eqn:e_n}, we prove $\Sigma \vdash A \Rightarrow A^{n+1}_\Sigma$
  by finitely many applications of additivity
  (from $\Sigma \vdash E \Rightarrow F$ and
  $\Sigma \vdash E \Rightarrow G$ we derive
  $\Sigma \vdash E \Rightarrow F{\cup}G$,
  see~\cite[Lemma 4.2]{BeVy:ADfDwG}) because $\Sigma$ is finite.
\end{proof}

\begin{proof}[Proof of Theorem~\ref{th:decidable}]
  Let $\mathbf{L}$ be either of $\Luk$ or $\Gog$ and
  let $\Sigma$ be finite. Taking into account Corollary~\ref{th:least_compl}
  and Lemma~\ref{le:fin}, we get
  \begin{align}
    |A \Rightarrow B|_\Sigma &= 
    S(B, A^\omega_\Sigma) = 
    S(B, A^n_\Sigma)
  \end{align}
  for some $n < \omega$. Since both $B$ and $A^n_\Sigma$ are finite
  and rational, $S(B, A^n_\Sigma)$ is a rational degree which proves that
  $|A \Rightarrow B|_\Sigma$ is rational. In addition, Lemma~\ref{le:prov_An}
  yields that $\Sigma \vdash A \Rightarrow A^n_\Sigma$. Hence, using
  the fact that $S(B,A^n_\Sigma){\otimes}B \subseteq A^n_\Sigma$,
  the projectivity gives $\Sigma \vdash A \Rightarrow c{\otimes}B$
  for $c = S(B, A^n_\Sigma) = |A \Rightarrow B|_\Sigma$.

  In addition, it is decidable
  whether $|A \Rightarrow B|_\Sigma = c$: In finitely many steps,
  one computes $A^n_\Sigma$ such that $A^n_\Sigma = A^{n+1}_\Sigma$
  (which exists owing to Lemma~\ref{le:fin}) and checks
  whether $c = S(B, A^n_\Sigma)$. In particular, for $c = 1$,
  it means $B \subseteq A^n_\Sigma$ if{}f $|A \Rightarrow B|_\Sigma = 1$
  if{}f $\Sigma \vdash A \Rightarrow 1{\otimes}B$ if{}f 
  $\Sigma \vdash A \Rightarrow B$, i.e., it is decidable whether
  $A \Rightarrow B$ is provable by $\Sigma$.
\end{proof}

\section{Rational $\mathbf{L}$-closure Operators}\label{sec:clops}
In this section, we present observations on closure structures associated
to models of theories consisting of formulas of the form~\eqref{eqn:fml}.
We are motivated by the fact that for any $\Sigma$, we may introduce an
operator which maps each evaluation $e$ to $e^\omega_\Sigma$ defined
by~\eqref{eqn:e_omega}. Under the assumption of~\eqref{eqn:VEEabi=aVEEbi},
we show that such operators are in fact finitary $\mathbf{L}$-closure
operators on propositional variables.

Recall from preliminaries the general notion of an $\mathbf{L}$-closure operator,
see Section~\ref{sec:prelim}.
We call an $\mathbf{L}$-closure operator $\C\!: L^U \to L^U$
\emph{finitary} whenever
\begin{align}
  \C(A) &= \textstyle\bigcup\{\C(B);\,
  B \subseteq A \text{ and } B \text{ is finite}\}
  \label{eqn:finitary}
\end{align}
for all $A \in L^U$. For $A,B \in L^U$, we put
\begin{align}
  B \sqsubseteq A
\end{align}
whenever $B$ is rational, finite, and $B \subseteq A$.
Using this notation, we call an $\mathbf{L}$-closure operator $\C\!: L^U \to L^U$
\emph{rational} whenever
\begin{align}
  \C(A) &= \textstyle\bigcup\{\C(B);\, B \sqsubseteq A\}
  \label{eqn:rational_finitary}
\end{align}
for all $A \in L^U$. By definition, a rational $\mathbf{L}$-closure operator
is finitary. The following assertions show that under
the condition~\eqref{eqn:VEEabi=aVEEbi},
finitary $\mathbf{L}$-closure operators are always rational.

\begin{lemma}\label{le:cup}
  Let $A$ be a finite $\mathbf{L}$-set in $U$ and let $\mathcal{B}$ be
  a directed system of $\mathbf{L}$-sets in $U$. If\/ $\mathbf{L}$ defined
  on the real unit interval satisfies~\eqref{eqn:VEEabi=aVEEbi}, then
  \begin{align}
    \textstyle S(A,\bigcup\mathcal{B}) &=
    \textstyle\bigvee\{S(A,B);\, B \in \mathcal{B}\}.
    \label{eqn:cup}
  \end{align}
\end{lemma}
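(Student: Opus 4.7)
The plan is to prove the two inequalities of \eqref{eqn:cup} separately. The direction $\bigvee\{S(A,B);\, B \in \mathcal{B}\} \leq S(A,\bigcup\mathcal{B})$ is routine: for every $B \in \mathcal{B}$ we have $B \subseteq \bigcup\mathcal{B}$, so monotonicity of the residuum in the second argument gives $A(u) \rightarrow B(u) \leq A(u) \rightarrow (\bigcup\mathcal{B})(u)$ for each $u \in U$; taking the infimum over $u$ yields $S(A,B) \leq S(A,\bigcup\mathcal{B})$, and then taking the supremum over $B \in \mathcal{B}$ gives the inequality.

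For the reverse inequality I would first use condition \eqref{eqn:VEEabi=aVEEbi} to commute the residuum with the supremum, obtaining
\begin{align*}
  S(A,\textstyle\bigcup\mathcal{B})
  &= \textstyle\bigwedge_{u \in U}\bigl(A(u) \rightarrow \bigvee_{B \in \mathcal{B}} B(u)\bigr)
  = \textstyle\bigwedge_{u \in U}\bigvee_{B \in \mathcal{B}}\bigl(A(u) \rightarrow B(u)\bigr).
\end{align*}
Since $A$ is finite, the set $U_0 = \{u \in U;\, A(u) > 0\}$ is finite, and for $u \notin U_0$ we have $A(u) \rightarrow B(u) = 0 \rightarrow B(u) = 1$ regardless of $B$. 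Thus the outer meet is effectively over the finite set $U_0$.

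Now fix an arbitrary $c \in [0,1]$ with $c < S(A,\bigcup\mathcal{B})$. For each $u \in U_0$ we then have $c < \bigvee_{B \in \mathcal{B}}(A(u) \rightarrow B(u))$, so there exists $B_u \in \mathcal{B}$ with $A(u) \rightarrow B_u(u) > c$. Using directedness of $\mathcal{B}$ together with the finiteness of $U_0$, a straightforward finite induction produces a single $B^* \in \mathcal{B}$ with $B_u \subseteq B^*$ for every $u \in U_0$. Monotonicity of the residuum then gives $A(u) \rightarrow B^*(u) \geq A(u) \rightarrow B_u(u) > c$ for $u \in U_0$, while $A(u) \rightarrow B^*(u) = 1 > c$ for $u \notin U_0$. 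Since $S(A,B^*)$ is a finite meet of values strictly above $c$, we conclude $S(A,B^*) > c$ and hence $\bigvee\{S(A,B);\, B \in \mathcal{B}\} > c$. As $c < S(A,\bigcup\mathcal{B})$ was arbitrary, the converse inequality follows.

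The main obstacle is the swap of $\bigwedge$ over $u$ and $\bigvee$ over $B$, which in a lattice only holds in one direction in general; the argument crucially exploits all three hypotheses, namely \eqref{eqn:VEEabi=aVEEbi} to push the join through the residuum, finiteness of $A$ to reduce the outer meet to finitely many factors, and directedness of $\mathcal{B}$ to absorb the finitely many selected witnesses $B_u$ into one member of $\mathcal{B}$. Dropping any of these would break the equality.
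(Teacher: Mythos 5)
Your proof is correct: the easy inequality, the use of \eqref{eqn:VEEabi=aVEEbi} to push the join inside the residuum, and the reduction of the outer meet to the finite support of $A$ are all sound, and the cut argument (fix $c$ strictly below $S(A,\bigcup\mathcal{B})$, choose a witness $B_u$ for each of the finitely many $u$ with $A(u)>0$, and merge them into one $B^*$ by directedness, which in particular guarantees $\mathcal{B}\neq\emptyset$) legitimately yields the reverse inequality. The route differs from the paper's in one technical respect: after the same application of \eqref{eqn:VEEabi=aVEEbi}, the paper invokes the complete distributivity of the real unit interval to rewrite $\bigwedge_u\bigvee_B$ as a supremum over choice functions $f$ assigning to each variable some member of $\mathcal{B}$, and only then uses finiteness of $A$ plus directedness to compress each $f$ into a single $B\in\mathcal{B}$; you bypass complete distributivity entirely by the elementary ``approximate from below by $c$'' argument. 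What the paper's version buys is a one-line equational chain once distributivity is granted; what yours buys is a more self-contained proof that needs nothing beyond the lattice being a complete chain with \eqref{eqn:VEEabi=aVEEbi}, at the cost of an explicit two-inequality structure. Both arguments exploit exactly the same three hypotheses, and your closing remark that dropping any of them breaks the equality matches the paper's intent (cf.\ the counterexample for the G\"odel algebra given after Lemma~\ref{le:fin_rational}).
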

\begin{proof}
  Note that since $\mathcal{B}$ is directed, it is also non-empty.
  Let $\mathcal{F}$ denote the set of all functions from $\mathrm{Var}$
  to $\mathcal{B}$. Under this notation, using~\eqref{eqn:VEEabi=aVEEbi}
  and the complete distributivity, we get
  \begin{align}
    S(A,\textstyle\bigcup\mathcal{B})
    &=
    \textstyle\bigwedge
    \{A(p) \rightarrow \textstyle(\bigcup\mathcal{B})(p);\, 
    p \in \mathrm{Var}\}
    \\
    &=
    \textstyle\bigwedge
    \{A(p) \rightarrow \textstyle\bigvee\{B(p);\, B \in \mathcal{B}\};\, 
    p \in \mathrm{Var}\}
    \\
    &=
    \textstyle\bigwedge
    \{\textstyle\bigvee\{A(p) \rightarrow B(p);\, B \in \mathcal{B}\};\, 
    p \in \mathrm{Var}\}
    \\
    &=
    \textstyle\bigvee
    \{\textstyle\bigwedge\{A(p) \rightarrow (f(p))(p);\, p \in \mathrm{Var}\};\, 
    f \in \mathcal{F}\}.
  \end{align}
  Now, since $A$ is finite and $\mathcal{B}$ is directed,
  for each $f \in \mathcal{F}$
  there is $B \in \mathcal{B}$ such that $f(p) \subseteq B$
  for all $p \in \mathrm{Var}$ satisfying $A(p) > 0$ and thus
  $(f(p))(p) \leq B(p)$ for all $p \in \mathrm{Var}$
  satisfying $A(p) > 0$. Therefore, we have
  \begin{align}
    S(A,\textstyle\bigcup\mathcal{B})
    &=
    \textstyle\bigvee
    \{\textstyle\bigwedge\{A(p) \rightarrow (f(p))(p);\, p \in \mathrm{Var}\};\, 
    f \in \mathcal{F}\}
    \\
    &=
    \textstyle\bigvee
    \{\textstyle\bigwedge\{A(p) \rightarrow B(p);\, p \in \mathrm{Var}\};\, 
    B \in \mathcal{B}\}
    \\
    &=
    \textstyle\bigvee
    \{S(A,B);\, B \in \mathcal{B}\},
  \end{align}
  which establishes~\eqref{eqn:cup}.
\end{proof}

\begin{lemma}\label{le:epsilon}
  Let $A$ be a finite $\mathbf{L}$-set in $U$. If\/ $\mathbf{L}$ defined on the
  real unit interval satisfies~\eqref{eqn:VEEabi=aVEEbi},
  then for any $\varepsilon < 1$ there is $B_\varepsilon \sqsubseteq A$
  such that $S(A,B_\varepsilon) > \varepsilon$.
\end{lemma}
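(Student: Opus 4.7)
\emph{Proof proposal.} The plan is to reduce the statement to a direct application of Lemma~\ref{le:cup}. Consider the system
\begin{align*}
  \mathcal{B} = \{B \in L^U;\, B \sqsubseteq A\}
\end{align*}
of all rational finite $\mathbf{L}$-subsets of $A$. First I would observe that $\mathcal{B}$ is non-empty (it contains the empty $\mathbf{L}$-set) and directed: for any $B_1, B_2 \in \mathcal{B}$, the union $B_1 \cup B_2$ is again rational, finite, and contained in $A$, hence it lies in $\mathcal{B}$ and extends both $B_1$ and $B_2$.

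Next I would show that $\bigcup \mathcal{B} = A$. The inclusion $\bigcup \mathcal{B} \subseteq A$ is immediate because each $B \in \mathcal{B}$ satisfies $B \subseteq A$. For the reverse inclusion, fix $u \in U$; for every rational $r \leq A(u)$, the singleton $\{{}^{r\!}/u\}$ is a rational finite $\mathbf{L}$-subset of $A$ and hence lies in $\mathcal{B}$. Therefore $\bigl(\bigcup \mathcal{B}\bigr)(u) \geq r$ for every such $r$, and by density of the rationals in the real unit interval, $\bigl(\bigcup \mathcal{B}\bigr)(u) \geq A(u)$.

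Now I would apply Lemma~\ref{le:cup}, whose hypotheses are met because $A$ is finite, $\mathbf{L}$ satisfies~\eqref{eqn:VEEabi=aVEEbi}, and $\mathcal{B}$ is directed:
\begin{align*}
  1 = S(A,A) = S\bigl(A, \textstyle\bigcup\mathcal{B}\bigr) =
  \textstyle\bigvee\{S(A,B);\, B \in \mathcal{B}\}.
\end{align*}
Since $\varepsilon < 1$, the definition of supremum in the linearly ordered unit interval forces the existence of some $B_\varepsilon \in \mathcal{B}$ with $S(A, B_\varepsilon) > \varepsilon$, and this is precisely the required $\mathbf{L}$-set.

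Essentially the entire argument is driven by Lemma~\ref{le:cup}; the only mildly delicate point is the use of density of the rationals to establish $\bigcup \mathcal{B} = A$, and this is routine. A direct elementwise alternative is also available and avoids Lemma~\ref{le:cup}: since $A$ is finite, condition~\eqref{eqn:VEEabi=aVEEbi} together with the density of rationals gives, for each $u$ in the support of $A$, that $\textstyle\bigvee\{A(u) \rightarrow r;\, r \in \RatL,\, r \leq A(u)\} = A(u) \rightarrow A(u) = 1$; one picks rational $r_u \leq A(u)$ with $A(u) \rightarrow r_u > \varepsilon$ and assembles them into $B_\varepsilon$, noting that a \emph{finite} meet of quantities each strictly exceeding $\varepsilon$ still strictly exceeds $\varepsilon$. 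Either way, finiteness of $A$ is precisely what lets a single uniform choice work.
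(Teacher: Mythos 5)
Your proof is correct and follows essentially the same route as the paper: both apply Lemma~\ref{le:cup} to a directed family of rational finite $\mathbf{L}$-subsets of $A$ whose union is $A$ (the paper uses an increasing chain $B_n$, you use the full family of all $B \sqsubseteq A$) and then conclude from $\bigvee\{S(A,B)\} = 1 > \varepsilon$. Your verification that the union equals $A$ via density of the rationals, and your elementary elementwise alternative, merely fill in details the paper leaves implicit.
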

\begin{proof}
  Consider $\mathbf{L}$-sets $B_n$ ($n < \omega$)
  with $B_n \subseteq B_{n+1}$ and $B_n \sqsubseteq A$
  for all $n < \omega$ such that 
  and $\bigcup\{B_n;\, n < \omega\} = A$.
  Since $\{B_n;\, n < \omega\}$ is obviously directed,
  Lemma~\ref{le:cup} yields
  \begin{align}
    1
    &=
    S(A,\textstyle\bigcup\{B_n;\, n < \omega\})
    =
    \textstyle\bigvee\{S(A,B_n);\, n < \omega\}.
  \end{align}
  Hence, for every $\varepsilon < 1$ there is $B_\varepsilon \sqsubseteq A$
  such that $S(A,B_\varepsilon) > \varepsilon$ otherwise our observation
  $1 = \textstyle\bigvee\{S(A,B_n);\, n < \omega\}$ would be violated.
\end{proof}

\begin{lemma}\label{le:fin_rational}
  Let $\mathbf{L}$ satisfy~\eqref{eqn:VEEabi=aVEEbi}. Then,
  every finitary $\mathbf{L}$-closure operator is rational.
\end{lemma}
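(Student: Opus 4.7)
The plan is to establish $\C(A) = \bigcup\{\C(D);\, D \sqsubseteq A\}$ by verifying both inclusions separately. The ``$\supseteq$'' direction is immediate from the monotonicity axiom~\eqref{eqn:cl_mon}: every $D \sqsubseteq A$ satisfies $D \subseteq A$, hence $S(D,A) = 1$ forces $\C(D) \subseteq \C(A)$. The real content is the ``$\subseteq$'' direction, which I would attack by first invoking the finitariness assumption~\eqref{eqn:finitary} to reduce it to the claim that $\C(B) \subseteq \bigcup\{\C(D);\, D \sqsubseteq A\}$ for every finite $\mathbf{L}$-set $B$ with $B \subseteq A$.

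Now fix such a finite $B$. Because $B$ is finite, Lemma~\ref{le:epsilon} supplies, for every $\varepsilon < 1$, a rational finite $D_\varepsilon \sqsubseteq B$ with $S(B,D_\varepsilon) > \varepsilon$; since $D_\varepsilon \subseteq B \subseteq A$, we also have $D_\varepsilon \sqsubseteq A$, so these approximants are legitimate competitors in the target supremum. The monotonicity axiom applied to $\C$ propagates the approximation upward to $S(\C(B),\C(D_\varepsilon)) > \varepsilon$, which, read pointwise, gives $\C(B)(p) \rightarrow \C(D_\varepsilon)(p) \geq \varepsilon$ for every $p \in U$. The adjointness between $\otimes$ and $\rightarrow$ then converts this into the concrete pointwise bound $\varepsilon \otimes \C(B)(p) \leq \C(D_\varepsilon)(p) \leq \bigvee\{\C(D)(p);\, D \sqsubseteq A\}$.

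To finish, I would take the supremum over $\varepsilon < 1$ and use left-continuity of $\otimes$ (equation~(1) of Section~\ref{sec:prelim}) to obtain $\bigvee_{\varepsilon<1}(\varepsilon \otimes \C(B)(p)) = \bigl(\bigvee_{\varepsilon<1}\varepsilon\bigr) \otimes \C(B)(p) = \C(B)(p)$, whence $\C(B)(p) \leq \bigvee\{\C(D)(p);\, D \sqsubseteq A\}$ holds pointwise, which is exactly the inclusion sought. The main obstacle I anticipate is not a single hard computation but rather the careful bookkeeping around the approximants: one must arrange the $D_\varepsilon$'s so that they sit inside $A$ and not merely below $B$, and one must recognise that the translation of the graded subsethood inequality into a pointwise statement via adjointness is precisely what sets up the final appeal to left-continuity of $\otimes$ that collapses the family of approximations back to $\C(B)$.
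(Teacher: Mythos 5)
Your proposal is correct and follows essentially the same route as the paper: reduce via finitariness to a finite set, use Lemma~\ref{le:epsilon} to obtain rational finite approximants $D_\varepsilon$ with subsethood degree above $\varepsilon$, push this through the monotony axiom~\eqref{eqn:cl_mon}, convert by adjointness to $\varepsilon\otimes\C(B)\subseteq\C(D_\varepsilon)$, and collapse the family by left-continuity of $\otimes$ (the paper does the same with a sequence $B_n$ satisfying $S(A,B_n)>1-\frac{1}{n}$ and the computation written set-wise rather than pointwise). The only cosmetic difference is that you apply the epsilon-approximation to a finite $B\subseteq A$ supplied by finitariness of $\C$, whereas the paper first notes it suffices to prove the identity for finite $A$ and then approximates $A$ itself; these are the same reduction.
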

\begin{proof}
  Let $\C\!: L^U \to L^U$ be an $\mathbf{L}$-closure operator which is finitary.
  It suffices to show that for any finite $\mathbf{L}$-set $A$ in $U$,
  we have $\C(A) = \bigcup\{\C(B);\, B \sqsubseteq A\}$.
  Based on the observation in Lemma~\ref{le:epsilon},
  for each $n < \omega$ we let $B_n \in L^U$
  such that $B_n \sqsubseteq A$ and $S(A,B_n) > 1 - \frac{1}{n}$.
  Applying the monotony~\eqref{eqn:cl_mon} of $\C$, for any $n < \omega$,
  we have $S(A,B_n) \otimes \C(A) \subseteq \C(B_n)$, i.e.,
  \begin{align}
    \C(A) &= 
    1 \otimes \C(A) =
    \textstyle\bigvee\{S(A,B_n) ;\, n < \omega\} \otimes \C(A)
    \\
    &=
    \textstyle\bigcup\{S(A,B_n) \otimes \C(A);\, n < \omega\}
    \\
    &\subseteq
    \textstyle\bigcup\{\C(B_n);\, n < \omega\}.
  \end{align}
  The converse inclusion holds trivially. As a consequence, $\C$ is rational.
\end{proof}

\begin{remark}
  Let us note that the assumption of $\mathbf{L}$ satisfying
  \eqref{eqn:VEEabi=aVEEbi} in Lemma~\ref{le:fin_rational} is essential.
  Indeed, in general there are finitary $\mathbf{L}$-closure operators which
  are not rational. For instance, let $\mathbf{L}$ be the standard G\"odel algebra,
  consider $U = \{u\}$, and let $c$ be an irrational number in $[0,1]$. Put
  \begin{align}
    \C(\{{}^{a\!}/u\})(u) &=
    \begin{cases}
      a, &\text{for } a < c, \\
      1, &\text{otherwise.}
    \end{cases}
  \end{align}
  Obviously, $\C$ satisfies \eqref{eqn:cl_ext} and \eqref{eqn:cl_idm}. In order
  to see that $\C$ satisfies \eqref{eqn:cl_mon}, observe that in the non-trivial
  case for $\{{}^{a\!}/u\}$ and $\{{}^{b\!}/u\}$ with $a > b$, we have
  $S(\{{}^{a\!}/u\},\{{}^{b\!}/u\}) = b$. Now, if $a \geq c$ and $b < c$, we get
  \begin{align}
    S(\C(\{{}^{a\!}/u\}),\C(\{{}^{b\!}/u\})) = 1 \rightarrow b = b =
    S(\{{}^{a\!}/u\},\{{}^{b\!}/u\})
  \end{align}
  If both $a \geq c$ and $b \geq c$, the condition is trivial;
  the same applies if $a < c$ and $b < c$. Altogether, $\C$ is
  an $\mathbf{L}$-closure operator. Furthermore,
  \begin{align}
    \textstyle\bigvee\{\C(\{{}^{a\!}/u\})(u);\, a \in \RatL \text{ and } a \leq c\} =
    c < 1 = \C(\{{}^{c\!}/u\})(u),
  \end{align}
  i.e., $\C$ is finitary but it is not rational.
\end{remark}

Now, for any $\Sigma$ and evaluation $e$, we put
\begin{align}
  \C[\Sigma](e) &= e^\omega_\Sigma.
  \label{eqn:c}
\end{align}
The following assertions characterize operators defined as in~\eqref{eqn:c}
for all possible choices of $\Sigma$.

\begin{theorem}\label{th:fp1}
  Let $\mathbf{L}$ satisfy~\eqref{eqn:VEEabi=aVEEbi}. Then, for each $\Sigma$,
  $\C[\Sigma]\!: L^\mathrm{Var} \to L^\mathrm{Var}$ defined by~\eqref{eqn:c}
  is a rational $\mathbf{L}$-closure operator.
\end{theorem}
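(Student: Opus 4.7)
The plan is to verify that $\C[\Sigma]$ satisfies the three axioms \eqref{eqn:cl_ext}, \eqref{eqn:cl_mon}, and \eqref{eqn:cl_idm} of an $\mathbf{L}$-closure operator and additionally check the finitarity condition \eqref{eqn:finitary}. Rationality then follows at once from Lemma~\ref{le:fin_rational}.

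Extensivity $e \subseteq \C[\Sigma](e)$ is immediate from $e^0_\Sigma = e$ and the construction of $e^\omega_\Sigma$. For idempotency, Lemma~\ref{le:least_model} gives that $\C[\Sigma](e) = e^\omega_\Sigma$ is a model of $\Sigma$; a straightforward induction using \eqref{eqn:e_n} shows that whenever $e'$ is a model of $\Sigma$ we have $(e')^n_\Sigma = e'$ for all $n < \omega$ (since $S(A,e') \otimes B \subseteq e'$ by the model property), hence $\C[\Sigma](\C[\Sigma](e)) = \C[\Sigma](e)$.

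The main technical work lies in the graded monotony. I plan to prove by induction on $n < \omega$ that $s := S(e_1,e_2)$ satisfies $s \otimes (e_1)^n_\Sigma \subseteq (e_2)^n_\Sigma$. The base case is $s \otimes e_1 \subseteq e_2$, which follows from the definition of $S$ by adjointness. For the inductive step, expanding \eqref{eqn:e_n} and using left-continuity of $\otimes$ reduces the task to showing $s \otimes S(A,(e_1)^n_\Sigma) \otimes B \subseteq (e_2)^{n+1}_\Sigma$ for each $A \Rightarrow B \in \Sigma$; the key auxiliary inequality is $s \otimes S(A,(e_1)^n_\Sigma) \leq S(A,(e_2)^n_\Sigma)$, obtained by distributing $s \otimes$ through the infimum defining $S$, applying the induction hypothesis, and invoking adjointness. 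Passing to the limit via $s \otimes \bigvee_n (e_1)^n_\Sigma = \bigvee_n s \otimes (e_1)^n_\Sigma$ gives $s \otimes \C[\Sigma](e_1) \subseteq \C[\Sigma](e_2)$, equivalent by adjointness to $s \leq S(\C[\Sigma](e_1), \C[\Sigma](e_2))$.

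For finitarity, one inclusion $\bigcup\{\C[\Sigma](B);\, B \subseteq e \text{ finite}\} \subseteq \C[\Sigma](e)$ is the specialization of graded monotony at $s = 1$. For the converse, I will show by induction on $n$ that $e^n_\Sigma \subseteq \bigcup\{B^n_\Sigma;\, B \subseteq e \text{ finite}\}$. The base case is the standard decomposition of $e$ as the union of its finite subsets. In the inductive step, the family $\mathcal{D}_n = \{B^n_\Sigma;\, B \subseteq e \text{ finite}\}$ is directed, since finite subsets of $e$ are closed under finite union and ordinary monotony propagates through the $n$-fold construction. Because each antecedent $A$ appearing in $\Sigma$ is finite, Lemma~\ref{le:cup} applies and yields $S(A,\bigcup\mathcal{D}_n) = \bigvee\{S(A,B^n_\Sigma);\, B \subseteq e \text{ finite}\}$. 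Combining this with the induction hypothesis, the definition \eqref{eqn:e_n}, and left-continuity of $\otimes$ supplies the inductive step. Taking unions over $n$ delivers $\C[\Sigma](e) \subseteq \bigcup\{\C[\Sigma](B);\, B \subseteq e \text{ finite}\}$, so $\C[\Sigma]$ is finitary, and Lemma~\ref{le:fin_rational} concludes the proof.
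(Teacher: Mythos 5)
Your proposal is correct, but it follows a genuinely different route from the paper's. The paper does not verify the closure axioms from the iterative construction at all: it simply cites \cite[Theorem~3.9]{BeVy:ADfDwG} for the fact that $\C[\Sigma]$ is an $\mathbf{L}$-closure operator, and then it establishes the rationality condition \eqref{eqn:rational_finitary} directly, without passing through finitarity. Concretely, for $\mathcal{G} = \{g \in L^\mathrm{Var};\, g \sqsubseteq e\}$ it shows, using directedness of $\mathcal{G}$ (hence of $\{\C[\Sigma](g);\, g \in \mathcal{G}\}$ by monotony) and Lemma~\ref{le:cup}, that $\bigcup\{\C[\Sigma](g);\, g \in \mathcal{G}\}$ is a model of $\Sigma$ containing $e$; since $\C[\Sigma](e) = e^\omega_\Sigma$ is the \emph{least} such model by Lemma~\ref{le:least_model}, the nontrivial inclusion of \eqref{eqn:rational_finitary} follows in one step. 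You instead verify extensivity, idempotency and graded monotony by induction on the stages $e^n_\Sigma$ (all of these steps are sound: the key inequality $s \otimes S(A,(e_1)^n_\Sigma) \leq S(A,(e_2)^n_\Sigma)$ needs only adjointness and the induction hypothesis), prove plain finitarity \eqref{eqn:finitary} by a stage-by-stage induction using Lemma~\ref{le:cup} on the directed family $\{B^n_\Sigma;\, B \subseteq e \text{ finite}\}$, and then convert finitarity into rationality via Lemma~\ref{le:fin_rational} --- a lemma the paper proves but uses only in Theorem~\ref{th:fp2}. Your route buys self-containedness (no appeal to the external closure-operator theorem) at the cost of a longer two-level induction; the paper's route buys brevity, because the least-model characterization collapses both finitarity and rationality into the single observation that a union of closures of finite rational subsets is again a model containing $e$. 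Both arguments invoke the hypothesis \eqref{eqn:VEEabi=aVEEbi} in the same place, namely through Lemma~\ref{le:cup} (and, in your case, additionally through Lemma~\ref{le:fin_rational}).
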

\begin{proof}
  The fact that $\C[\Sigma]$ is an $\mathbf{L}$-closure operator follows
  by the general result in~\cite[Theorem 3.9]{BeVy:ADfDwG} which holds for
  any complete residuated lattice $\mathbf{L}$ taken as the structure of degrees.
  Also, the ``$\supseteq$''-part of \eqref{eqn:rational_finitary} is trivial.
  Thus, it suffices
  to check the ``$\subseteq$''-part of \eqref{eqn:rational_finitary}.
  Let $\mathcal{G} = \{g \in L^\mathrm{Var};\, g \sqsubseteq e\}$.
  We proceed by checking that
  $\textstyle\bigcup \{\C[\Sigma](g);\, g \in \mathcal{G}\}$ is
  a model of $\Sigma$ containing $e$. Observe that $\mathcal{G}$
  is directed and so is $\{\C[\Sigma](g);\, g \in \mathcal{G}\}$ owing
  to the monotony~\eqref{eqn:cl_mon} of $\C$.
  Take any $A \Rightarrow B \in \Sigma$.
  Applying Lemma~\ref{le:cup} and the fact that for each $g \in \mathcal{G}$,
  $\C[\Sigma](g)$ is a model of $\Sigma$ and so
  $S(A,\C[\Sigma](g)) \otimes B \subseteq \C[\Sigma](g)$,
  it follows that
  \begin{align}
    S(A,\textstyle\bigcup\{\C[\Sigma](g);\, g \in \mathcal{G}\}) \otimes B
    &= 
    \textstyle\bigvee \{S(A,\C[\Sigma](g));\, g \in \mathcal{G}\} \otimes B 
    \\
    &=
    \textstyle\bigcup \{S(A,\C[\Sigma](g)) \otimes B;\, g \in \mathcal{G}\} 
    \\
    &\subseteq
    \textstyle\bigcup \{\C[\Sigma](g);\, g \in \mathcal{G}\},
  \end{align}
  i.e., $\textstyle\bigcup \{\C[\Sigma](g);\, g \in \mathcal{G}\}$
  is indeed a model of $\Sigma$ which obviously contains $e$.
\end{proof}

\begin{theorem}\label{th:fp2}
  Let $\mathbf{L}$ satisfy~\eqref{eqn:VEEabi=aVEEbi} and let
  $\C: L^\mathrm{Var} \to L^\mathrm{Var}$ be
  a finitary $\mathbf{L}$-closure operator.
  Then, there is $\Sigma$ such that $\C = \C_\Sigma$.
\end{theorem}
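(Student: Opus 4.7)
The plan is to construct $\Sigma$ directly from $\C$ as the set of all ``valid'' rational finite implications, i.e.,
\begin{align}
  \Sigma = \{B \Rightarrow D \,:\, B \sqsubseteq \text{everything},\ D \sqsubseteq \C(B)\},
\end{align}
meaning $\Sigma$ collects every $B \Rightarrow D$ where $B$ and $D$ are finite rational $\mathbf{L}$-sets in $\mathrm{Var}$ with $D \subseteq \C(B)$. I then have to show that the least model $\C[\Sigma](e) = e^\omega_\Sigma$ coincides with $\C(e)$ for every evaluation $e$.

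The first half is to show that $\C(e)$ is itself a model of $\Sigma$ containing $e$. Containment of $e$ is immediate from extensivity~\eqref{eqn:cl_ext}. For the model property, I take any $B \Rightarrow D \in \Sigma$ and need $S(B,\C(e)) \leq S(D,\C(e))$. Using monotony~\eqref{eqn:cl_mon} with idempotence~\eqref{eqn:cl_idm}, I obtain $S(B,\C(e)) \leq S(\C(B),\C(\C(e))) = S(\C(B),\C(e))$, and since $D \subseteq \C(B)$ implies $S(\C(B),\C(e)) \leq S(D,\C(e))$ by antitonicity of $\rightarrow$ in the first argument, the chain gives what I need. Hence $\C[\Sigma](e) \subseteq \C(e)$.

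The second half — the reverse inclusion — is the main obstacle and is where the hypothesis~\eqref{eqn:VEEabi=aVEEbi} enters crucially. Let $m$ be any model of $\Sigma$ with $e \subseteq m$; I must show $\C(e) \subseteq m$. By Lemma~\ref{le:fin_rational}, the finitary $\C$ is actually \emph{rational}, so $\C(e) = \bigcup\{\C(B) \,:\, B \sqsubseteq e\}$. Fix $B \sqsubseteq e$. Since $B \subseteq e \subseteq m$, we have $S(B,m) = 1$. The key observation is that because $\mathbf{L}$ lives on the real unit interval with $\RatL$ dense, any $X \in L^\mathrm{Var}$ satisfies $X = \bigcup\{D \,:\, D \sqsubseteq X\}$; applying this to $X = \C(B)$, every $D \sqsubseteq \C(B)$ yields a formula $B \Rightarrow D$ in $\Sigma$, so $m$ being a model forces $D = S(B,m)\otimes D \subseteq m$. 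Taking the union over all such $D$ gives $\C(B) \subseteq m$, and taking the union over all $B \sqsubseteq e$ gives $\C(e) \subseteq m$.

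Applying the reverse inclusion to $m = \C[\Sigma](e)$ (which is a model of $\Sigma$ containing $e$ by Lemma~\ref{le:least_model}) completes the proof: $\C(e) \subseteq \C[\Sigma](e) \subseteq \C(e)$, so $\C = \C[\Sigma]$. The delicate step is the second half, specifically the passage from the abstract fact that $\C(B)$ is a potentially irrational, possibly infinite $\mathbf{L}$-set to the concrete claim that its rational-finite approximants generate enough formulas in $\Sigma$; this rests on both the density of $\RatL$ in $[0,1]$ and on the rationality upgrade afforded by Lemma~\ref{le:fin_rational}.
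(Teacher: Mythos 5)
Your proof is correct and takes essentially the same route as the paper: the same $\Sigma$ consisting of all $B \Rightarrow D$ with $B$ finite rational and $D \sqsubseteq \C(B)$, the same monotony-plus-idempotence argument that $\C(e)$ is a model of $\Sigma$, and the same use of Lemma~\ref{le:fin_rational} together with the rational-finite approximation $\C(B) = \bigcup\{D;\, D \sqsubseteq \C(B)\}$ for the converse inclusion. The only organizational difference is that you verify $\C(e) \subseteq m$ for an arbitrary model $m$ containing $e$ and then specialize to $m = \C[\Sigma](e)$, whereas the paper applies the same argument to $A \sqsubseteq \C[\Sigma](e)$ to show $\C[\Sigma](e)$ is a fixed point of $\C$; the ingredients are identical.
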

\begin{proof}
  Let $\top \in L^\mathrm{Var}$ such that $\top(p) = 1$
  for all $p \in \mathrm{Var}$. We put
  \begin{align}
    \Sigma &= \{A \Rightarrow B;\,
    A \sqsubseteq \top \text{ and } B \sqsubseteq \C(A)\}
  \end{align}
  and prove the claim holds for $\Sigma$.

  First, we show that for any evaluation $e$,
  $\C(e)$ is a model of $\Sigma$, i.e., $\C(e)$ is closed under $\C[\Sigma]$.
  Let $A \Rightarrow B \in \Sigma$. Since $B \subseteq \C(A)$,
  \eqref{eqn:cl_mon} and~\eqref{eqn:cl_idm} yield
  \begin{align}
    S(A,\C(e)) &\leq S(\C(A),\C(\C(e)))
    \\
    &= S(\C(A),\C(e))
    \\
    &\leq S(B,\C(e)),
  \end{align}
  showing that $\C(e)$ is a model of $\Sigma$.

  Conversely, we show that $\C[\Sigma](e)$ is a fixed point of $\C$.
  Take any $A \sqsubseteq \C[\Sigma](e)$.
  Trivially, $S(A,\C[\Sigma](e)) = 1$ and thus for any
  $B \sqsubseteq \C(A)$, we have $S(B,\C[\Sigma](e)) = 1$, i.e.,
  $B \subseteq \C[\Sigma](e)$ because $A \Rightarrow B \in \Sigma$ and
  $\C[\Sigma](e)$ is a model of $\Sigma$. Now, the fact that 
  $B \subseteq \C[\Sigma](e)$ holds for all $B \sqsubseteq \C(A)$ yields that
  \begin{align}
    \C(A) =
    \textstyle\bigcup\{B;\, B \sqsubseteq \C(A)\} \subseteq \C[\Sigma](e).
  \end{align}
  Furthermore, the previous inclusion holds for any $A \sqsubseteq \C[\Sigma](e)$,
  i.e., utilizing the fact that $\C$ is rational which follows by
  Lemma~\ref{le:fin_rational}, we have
  \begin{align}
    \C(\C[\Sigma](e)) =
    \textstyle\bigcup\{\C(A);\, A \sqsubseteq \C[\Sigma](e)\}
    \subseteq \C[\Sigma](e),
  \end{align}
  proving that $\C[\Sigma](e)$ is a fixed point of $\C$.
\end{proof}

Theorem~\ref{th:fp1} and Theorem~\ref{th:fp2} showed that under the
assumption~\eqref{eqn:VEEabi=aVEEbi}, the operators
on $\mathbf{L}$-sets of propositional variables defined by~\eqref{eqn:c}
are exactly all rational $\mathbf{L}$-closure operators on
propositional variables. In other words, the systems of fixed points of
rational $\mathbf{L}$-closure operators on propositional variables
are the systems of models of sets of rational fuzzy attribute implications,
cf. \cite{Vy:Faiep} for a study of the expressive power of general fuzzy
attribute implications parameterized by hedges.


\subsubsection*{Conclusion}
Logic for reasoning with graded if-then rules is proposed. The rules can be seen
as formulas of the form of implications which contain constants for rational truth
degrees. The interpretation of formulas is given by complete residuated lattices
defined on the real unit interval. Degrees of semantic entailment and degrees of
provability are defined. For complete residuated lattices with residuum which
is continuous in the second argument, the logic is Pavelka-style complete which
means that degrees of semantic entailment agree with degrees of provability.
Characterization of the degrees of provability based on computing least models
is established. In case of finite theories and the standard \L ukasiewicz or
Goguen (product) algebras, the least models may be determined in finitely many
steps which shows that the logic based on these structures of degrees is
decidable. Structures of models are identified with fixed points of
rational $\mathbf{L}$-closure operators. It is shown that the property
of being rational is a consequence of the property of being finitary
in case of structures of truth degrees with residua continuous in
the second argument.

\subsubsection*{Acknowledgment}
Supported by grant no. \verb|P202/14-11585S| of the Czech Science Foundation.


\footnotesize
\bibliographystyle{amsplain}
\bibliography{rfal}

\providecommand{\bysame}{\leavevmode\hbox to3em{\hrulefill}\thinspace}
\providecommand{\MR}{\relax\ifhmode\unskip\space\fi MR }
\providecommand{\MRhref}[2]{%
  \href{http://www.ams.org/mathscinet-getitem?mr=#1}{#2}
}
\providecommand{\href}[2]{#2}
\begin{thebibliography}{10}

\bibitem{AgImSw:ASR}
Rakesh Agrawal, Tomasz Imieli\'{n}ski, and Arun Swami, \emph{Mining association
  rules between sets of items in large databases}, Proceedings of the 1993 ACM
  SIGMOD International Conference on Management of Data (New York, NY, USA),
  SIGMOD '93, ACM, 1993, pp.~207--216.

\bibitem{Arm:Dsdbr}
William~Ward Armstrong, \emph{Dependency structures of data base
  relationships}, Information Processing 74: Proceedings of IFIP Congress
  (Amsterdam) (J.~L. Rosenfeld and H.~Freeman, eds.), North Holland, 1974,
  pp.~580--583.

\bibitem{Baaz}
Mathias Baaz, \emph{Infinite-valued {G}{\"o}del logics with 0-1 projections and
  relativizations}, G{\"O}DEL '96, Logical Foundations of Mathematics, Computer
  Sciences and Physics (Berlin/Heidelberg), Lecture Notes in Logic, vol.~6,
  Springer-Verlag, 1996, pp.~23--33.

\bibitem{Be:Fco}
Radim Belohlavek, \emph{Fuzzy closure operators}, Journal of Mathematical
  Analysis and Applications \textbf{262} (2001), no.~2, 473--489.

\bibitem{Bel:FRS}
\bysame, \emph{Fuzzy {R}elational {S}ystems: {F}oundations and {P}rinciples},
  Kluwer Academic Publishers, Norwell, MA, USA, 2002.

\bibitem{BeCoEnMoVy:SFL}
Radim Belohlavek, Pablo Cordero, Manuel Enciso, \'{A}ngel Mora, and Vilem
  Vychodil, \emph{An efficient reasoning method for dependencies over
  similarity and ordinal data}, Modeling Decisions for Artificial Intelligence
  (Vicen\c{c} Torra, Yasuo Narukawa, Beatriz L\'{o}pez, and Mateu Villaret,
  eds.), Lecture Notes in Computer Science, vol. 7647, Springer Berlin
  Heidelberg, 2012, pp.~408--419.

\bibitem{BeVy:ICFCA}
Radim Belohlavek and Vilem Vychodil, \emph{Attribute implications in a fuzzy
  setting}, Formal Concept Analysis (Rokia Missaoui and J\"urg Schmidt, eds.),
  Lecture Notes in Computer Science, vol. 3874, Springer Berlin Heidelberg,
  2006, pp.~45--60.

\bibitem{BeVy:Falcrl}
\bysame, \emph{Fuzzy attribute logic over complete residuated lattices},
  Journal of Experimental \& Theoretical Artificial Intelligence \textbf{18}
  (2006), no.~4, 471--480.

\bibitem{BeVy:MRAP}
\bysame, \emph{On proofs and rule of multiplication in fuzzy attribute logic},
  Foundations of Fuzzy Logic and Soft Computing (P.~Melin, O.~Castillo,
  T.~Aguilar, L.\, J.~Kacprzyk, and W.~Pedrycz, eds.), Lecture Notes in
  Computer Science, vol. 4529, Springer Berlin Heidelberg, 2007, pp.~471--480.

\bibitem{BeVy:Fcalh}
\bysame, \emph{Formal concept analysis and linguistic hedges}, International
  Journal of General Systems \textbf{41} (2012), no.~5, 503--532.

\bibitem{BeVy:ADfDwG}
\bysame, \emph{Attribute dependencies for data with grades}, CoRR
  \textbf{\href{http://arxiv.org/abs/1402.2071}{abs/1402.2071}} (2014).

\bibitem{Bir:LT}
Garrett Birkhoff, \emph{Lattice theory}, 1st ed., American Mathematical
  Society, Providence, 1940.

\bibitem{CiEsGo:Lltc}
Roberto Cignoli, Francesc Esteva, and Llu{\'\i}s Godo, \emph{On {\l}ukasiewicz
  logic with truth constants}, Theoretical Advances and Applications of Fuzzy
  Logic and Soft Computing (Oscar Castillo, Patricia Melin, Oscar~Montiel Ross,
  Roberto Sep\'ulveda~Cruz, Witold Pedrycz, and Janusz Kacprzyk, eds.),
  Advances in Soft Computing, vol.~42, Springer Berlin Heidelberg, 2007,
  pp.~869--875.

\bibitem{CiHaNo1}
Petr Cintula, Petr H\'ajek, and Carles Noguera (eds.), \emph{Handbook of
  {M}athematical {F}uzzy {L}ogic, {V}olume 1}, Studies in Logic, Mathematical
  Logic and Foundations, vol.~37, College Publications, 2011.

\bibitem{CiHaNo2}
Petr Cintula, Petr H\'ajek, and Carles Noguera (eds.), \emph{Handbook of
  {M}athematical {F}uzzy {L}ogic, {V}olume 2}, Studies in Logic, Mathematical
  Logic and Foundations, vol.~38, College Publications, 2011.

\bibitem{SL}
Pablo Cordero, \'{A}ngel Mora, Inmaculada~P\'erez de~Guzm\'an, and Manuel
  Enciso, \emph{Non-deterministic ideal operators: An adequate tool for
  formalization in data bases}, Discrete Applied Mathematics \textbf{156}
  (2008), no.~6, 911--923.

\bibitem{DaDa:RDW89_91}
Chris~J. Date and Hugh Darwen, \emph{Relational {D}atabase {W}ritings
  1989--1991}, ch.~The Role of Functional Dependence in Query Decomposition,
  pp.~133--154, Addison-Wesley Publishing Co., Inc., 1992.

\bibitem{DaPr}
Brian~A. Davey and Hilary~A. Priestley, \emph{Introduction to {L}attices and
  {O}rder}, Cambridge University Press, Cambridge, 1990.

\bibitem{EsGiGoNo:Atclct}
Francesc Esteva, Joan Gispert, Llu{\'\i}s Godo, and Carles Noguera,
  \emph{Adding truth-constants to logics of continuous t-norms: Axiomatization
  and completeness results}, Fuzzy Sets and Systems \textbf{158} (2007), no.~6,
  597--618.

\bibitem{EsGo:MTL}
Francesc Esteva and Llu{\'\i}s Godo, \emph{Monoidal t-norm based logic:
  {T}owards a logic for left-continuous t-norms}, Fuzzy Sets and Systems
  \textbf{124} (2001), no.~3, 271--288.

\bibitem{EsGoHaNa:Rflin}
Francesc Esteva, Llu{\'\i}s Godo, Petr H\'ajek, and Mirko Navara,
  \emph{Residuated fuzzy logics with an involutive negation}, Archive for
  Mathematical Logic \textbf{39} (2000), no.~2, 103--124.

\bibitem{EsGoNo:Fotfltc}
Francesc Esteva, Llu{\'\i}s Godo, and Carles Noguera, \emph{First-order t-norm
  based fuzzy logics with truth-constants: Distinguished semantics and
  completeness properties}, Annals of Pure and Applied Logic \textbf{161}
  (2009), no.~2, 185--202.

\bibitem{EsGoNo:Hedges}
Francesc Esteva, Llu\'\i s~Godo, and Carles Noguera, \emph{A logical approach
  to fuzzy truth hedges}, Information Sciences \textbf{232} (2013), 366--385.

\bibitem{GaJiKoOn:RL}
Nikolaos Galatos, Peter Jipsen, Tomacz Kowalski, and Hiroakira Ono,
  \emph{{R}esiduated {L}attices: {A}n {A}lgebraic {G}limpse at {S}ubstructural
  {L}ogics, {V}olume 151}, 1st ed., Elsevier Science, San Diego, USA, 2007.

\bibitem{GaWi:FCA}
Bernhard Ganter and Rudolf Wille, \emph{Formal concept analysis: Mathematical
  foundations}, 1st ed., Springer-Verlag New York, Inc., Secaucus, NJ, USA,
  1997.

\bibitem{Ger:FL}
Giangiacomo Gerla, \emph{{F}uzzy {L}ogic. {M}athematical {T}ools for
  {A}pproximate {R}easoning}, Kluwer Academic Publishers, Dordrecht, The
  Netherlands, 2001.

\bibitem{Gog:LFS}
Joseph~A. Goguen, \emph{{$L$}-fuzzy sets}, Journal of Mathematical Analysis and
  Applications \textbf{18} (1967), no.~1, 145--174.

\bibitem{Gog:Lic}
\bysame, \emph{The logic of inexact concepts}, Synthese \textbf{19} (1979),
  325--373.

\bibitem{Got:Mfl}
Siegfried Gottwald, \emph{Mathematical fuzzy logics}, Bulletin of Symbolic
  Logic \textbf{14} (2008), no.~2, 210--239.

\bibitem{GuDu}
Jean-Louis Guigues and Vincent Duquenne, \emph{Familles minimales
  d'im\-pli\-ca\-tions informatives resultant d'un tableau de donn\'ees
  binaires}, Math. Sci. Humaines \textbf{95} (1986), 5--18.

\bibitem{Haj:Flah}
Petr H\'ajek, \emph{Fuzzy logic and arithmetical hierarchy}, Fuzzy Sets and
  Systems \textbf{73} (1995), no.~3, 359--363.

\bibitem{Haj:MFL}
\bysame, \emph{Metamathematics of {F}uzzy {L}ogic}, Kluwer Academic Publishers,
  Dordrecht, The Netherlands, 1998.

\bibitem{Haj:Ovt}
\bysame, \emph{On very true}, Fuzzy Sets and Systems \textbf{124} (2001),
  no.~3, 329--333.

\bibitem{HaPa:Adfl}
Petr H\'ajek and Jeff Paris, \emph{A dialogue on fuzzy logic}, Soft Computing
  \textbf{1} (1997), no.~1, 3--5.

\bibitem{Hol}
Richard Holzer, \emph{Knowledge acquisition under incomplete knowledge using
  methods from formal concept analysis: Part {I}}, Fundamenta Informaticae
  \textbf{63} (2004), no.~1, 17--39.

\bibitem{KMP:TN}
Erich~Peter Klement, Radko Mesiar, and Endre Pap, \emph{Triangular {N}orms}, 1
  ed., Springer, 2000.

\bibitem{KuVy:Flprrai}
Tomas Kuhr and Vilem Vychodil, \emph{Fuzzy logic programming reduced to
  reasoning with attribute implications}, Fuzzy Sets and Systems \textbf{262}
  (2015), 1--20.

\bibitem{Lloyd84}
John~W. Lloyd, \emph{Foundations of {L}ogic {P}rogramming}, Springer-Verlag New
  York, Inc., New York, NY, USA, 1984.

\bibitem{Ma:Mcrdm}
David Maier, \emph{Minimum covers in relational database model}, J. ACM
  \textbf{27} (1980), no.~4, 664--674.

\bibitem{Mai:TRD}
\bysame, \emph{Theory of {R}elational {D}atabases}, Computer Science Pr,
  Rockville, MD, USA, 1983.

\bibitem{NoPeMo:MPFL}
Vil\'em Nov\'ak, Irina Perfilieva, and Ji\v{r}\'\i{} Mo\v{c}ko\v{r},
  \emph{Mathematical {P}rinciples of {F}uzzy {F}ogic}, Kluwer Academic
  Publishers, Boston, MA, USA, 1999.

\bibitem{Pav:Ofl1}
Jan Pavelka, \emph{On fuzzy logic {I}: {M}any-valued rules of inference},
  Mathematical Logic Quarterly \textbf{25} (1979), no.~3--6, 45--52.

\bibitem{Pav:Ofl2}
\bysame, \emph{On fuzzy logic {II}: {E}nriched residuated lattices and
  semantics of propositional calculi}, Mathematical Logic Quarterly \textbf{25}
  (1979), no.~7--12, 119--134.

\bibitem{Pav:Ofl3}
\bysame, \emph{On fuzzy logic {III}: {S}emantical completeness of some
  many-valued propositional calculi}, Mathematical Logic Quarterly \textbf{25}
  (1979), no.~25--29, 447--464.

\bibitem{Po:FB}
Silke Pollandt, \emph{Fuzzy-{B}egriffe: {F}ormale {B}egriffsanalyse unscharfer
  {D}aten}, Springer, 1997.

\bibitem{Raney}
George~N. Raney, \emph{Completely distributive complete lattices}, Proc. Amer.
  Math. Soc. \textbf{3} (1952), 677--680.

\bibitem{RoEsGaGo:Icoar}
Ricardo~Oscar Rodr{\'{\i}}guez, Francesc Esteva, Pere Garcia, and Llu{\'\i}s
  Godo, \emph{On implicative closure operators in approximate reasoning},
  International Journal of Approximate Reasoning \textbf{33} (2003), no.~2,
  159--184.

\bibitem{SaCiEsGoNo:Plwc}
Petr Savick\'y, Roberto Cignoli, Francesc Esteva, Llu{\'\i}s Godo, and Carles
  Noguera, \emph{On product logic with truth-constants}, Journal of Logic and
  Computation \textbf{16} (2006), no.~2, 205--225.

\bibitem{TaTi:Gist}
Gaisi Takeuti and Satoko Titani, \emph{Globalization of intuitionistic set
  theory}, Annals of Pure and Applied Logic \textbf{33} (1987), 195--211.

\bibitem{Vy:Faiep}
Vilem Vychodil, \emph{Fuzzy attribute implications and their expressive power},
  International Journal of Uncertainty, Fuzziness and Knowledge-Based Systems
  \textbf{21} (2013), no.~4, 483--496.

\bibitem{Vy:Pasefai}
\bysame, \emph{Parameterizing the semantics of fuzzy attribute implications by
  systems of isotone {G}alois connections}, CoRR
  \textbf{\href{http://arxiv.org/abs/1410.6960}{abs/1410.6960}} (2014).

\bibitem{DiWa}
Morgan Ward and Robert~P. Dilworth, \emph{Residuated lattices}, Trans. Amer.
  Math. Soc. \textbf{45} (1939), 335--354.

\bibitem{Zak:Mnrar}
Mohammed~J. Zaki, \emph{Mining non-redundant association rules}, Data Mining
  and Knowledge Discovery \textbf{9} (2004), 223--248.

\end{thebibliography}

\end{document}